\newtheorem{definition}{Definition}
\newtheorem{lemma}{Lemma}
\newtheorem{theorem}{Theorem}
\tikzset{defnode/.style={draw, circle, fill=gray, inner sep=.6pt}}
\tikzset{mis/.style={draw, circle, red, inner sep=1.5pt}}
\title{Robustness in Highly Dynamic Networks}
\author{Arnaud Casteigts$^1$ \and Swan Dubois$^2$ \and Franck Petit$^2$ \and John Michael Robson$^1$}
\date{}
\begin{document}

\newcommand{\forallMIS}{\ensuremath{{\cal RMIS^\forall}}\xspace}
\newcommand{\existsMIS}{\ensuremath{{\cal RMIS^\exists}}\xspace}
\newcommand{\RMIS}{{\sc RobustMIS}\xspace}
\newcommand{\TC}{{\ensuremath{{\cal TC^{\cal R}}}}\xspace}
\newcommand{\vs}{{\em vs.}\xspace}
\newcommand{\ie}{{i.e.}\xspace}
\newcommand{\eg}{{e.g.}\xspace}
\newcommand{\etc}{{\em etc.}}
\newcommand{\aka}{{\em a.k.a.}\xspace}
\newcommand{\fixme}[1]{\fbox{\textsl{{\bf #1}}}}
\newcommand{\FIXME}[1]{\fixme{#1} \marginpar[\null\hspace{2cm} FIXME]{FIXME}}
\newcommand{\wrt}{{\em w.r.t.}\xspace}

\newcommand{\typePI}{\fbox{\footnotesize{\textsl{PI}}}\xspace}
\newcommand{\typePIdesc}{Possibly In\xspace}

\newcommand{\typePO}{\fbox{\footnotesize{\textsl{PO}}}\xspace}
\newcommand{\typePOdesc}{Possibly Out\xspace}

\newcommand{\typePE}{\fbox{\footnotesize{\textsl{PE}}}\xspace}
\newcommand{\typePEdesc}{Possibly External\xspace}

\newcommand{\typeN}{\fbox{\footnotesize{\textsl{N}}}\xspace}
\newcommand{\typeNdesc}{Negative\xspace}

\newcommand{\typeE}{\fbox{\footnotesize{\textsl{E}}}\xspace}
\newcommand{\typeEdesc}{End\xspace}

\newcommand{\ABCT}{\ensuremath{\mathcal{ABC}}-tree\xspace}
\newcommand{\ABCST}{\ensuremath{\mathcal{ABC}}-subtree\xspace}
\newcommand{\A}{\ensuremath{\mathcal{A}}\xspace}
\newcommand{\B}{\ensuremath{\mathcal{B}}\xspace}
\newcommand{\C}{\ensuremath{\mathcal{C}}\xspace}
\newcommand{\PV}{\ensuremath{\mathcal{P}}\xspace}
\newcommand{\BS}{\ensuremath{\mathcal{S}}\xspace}

\newcommand{\CT}{\ensuremath{{\cal T}_G}\xspace}
\newcommand{\R}{\ensuremath{\mathcal{R_{\CT}}}\xspace}
\newcommand{\LONG}[1]{#1}

\newenvironment{missingproof}{}{{\hfill $\Box$}\vspace{.5pc}}

\setlength\textfloatsep{10pt plus 2pt minus 2pt}

\maketitle

\thispagestyle{empty}

\begin{abstract}
  We investigate a special case of hereditary property that we refer to as {\em robustness}. A property is {\em robust} in a given graph if it is inherited by all connected spanning subgraphs of this graph. We motivate this definition in different contexts, showing that it plays a central role in
highly dynamic networks, although the problem is defined in terms of classical (static) graph theory. 
  In this paper, we focus on the robustness of {\em maximal independent sets} (MIS). Following the above definition, a MIS is said to be {\em robust} (RMIS) if it remains a valid MIS in all connected spanning subgraphs of the original graph.
We characterize the class of graphs in which {\em all} possible MISs are robust. We show that, in these particular graphs, the problem of
  finding a robust MIS is {\em local}; that is, we present an RMIS algorithm using only
  a sublogarithmic number of rounds (in the number of nodes $n$) in the ${\cal LOCAL}$ model. On the negative side, we show that, in general graphs, the problem is not local. Precisely, we prove a $\Omega(n)$ lower bound on the number of rounds required for the nodes to decide consistently in some graphs. This result implies a separation between the RMIS problem and the MIS problem in general graphs.  It also implies that any strategy in this case is asymptotically (in order) as bad as
  collecting all the network information at one node and solving the problem in a centralized manner. Motivated by this observation, we present a centralized algorithm that computes a robust MIS in a given graph, if one exists, and rejects otherwise. Significantly, this algorithm requires only a polynomial amount of local computation time, despite the fact that exponentially many MISs and exponentially many connected spanning subgraphs may exist.
\end{abstract}

\footnotetext[1]{Université de Bordeaux, CNRS, LaBRI UMR 5800, France}
\footnotetext[2]{UPMC Sorbonne Universit\'es, CNRS, Inria, LIP6 UMR 7606, France}

\setcounter{page}{1}

\section{Introduction}
\label{sec:introduction}

Highly dynamic networks are made of dynamic (often mobile) entities such as vehicles, drones, or robots. 
It is generally assumed, in these networks, that the set of entities (nodes) is constant, while the set of communication
links varies over time. Many classical assumptions do not hold in these networks. For example, the topology may be
disconnected at any instant. It may also happen that an edge present at some time never appears again in the future.
In fact, of all the edges that appear at least once, one can distinguish between two essential sets: the set of {\em
recurrent} edges, which always reappear in the future (or remain present), and the set of {\em non recurrent} edges
which eventually disappear in the future. The static graph containing the union of both edge sets is called the {\em
footprint} of the network~\cite{CFQS12}, and its restriction to the recurrent edges is the {\em eventual footprint} 
of the network~\cite{DLLP16}.

It is not clear, at first, what assumptions seem reasonable in a highly dynamic network. Special cases have been considered recently, such as 
always-connected dynamic 
networks~\cite{OW05}, $T$-interval connected networks~\cite{KLO10}, or networks the edges of which correspond to pairwise interactions obeying a uniform random scheduler (see e.g.~\cite{AADFP06,MCS11}). 
Arguably, one of the weakest possible assumption is that any pair of nodes be able to communicate infinitely often through {\em temporal paths} (or journeys). Interestingly enough, this property was identified more than three decades ago by Awerbuch and Even~\cite{AE84} and remained essentially ignored afterwards. The corresponding class of dynamic networks (Class~5 in~\cite{CFQS12}---here referred to as \TC for consistency with various notations~\cite{DKP15,GCLL15,AGMS15}) is however one of the most general and it actually includes the three aforementioned cases. 

Dubois {\it et al.}~\cite{DKP15} observe that class $\TC$ is actually the set of dynamic networks whose {\em eventual footprint} is connected. In other words, it is more than reasonable to assume that some of the edges are recurrent and their union does form a {\em connected} spanning subgraph.
Solving classical problems such as symmetry-breaking tasks relative to this particular set thus makes sense, as the nodes can rely forever on the corresponding solution, even though intermittently~\cite{CF13b,DKP15}. 
Unfortunately, it is impossible for a node to distinguish between the set of recurrent edges and the set of non recurrent edges.  So, the best the nodes can do is to compute a solution relative to the footprint, hoping that this solution still makes sense in the eventual footprint, whatever it is. (Whether, and how the nodes can learn the footprint itself is discussed later on.)

This context suggests a particular form of heredity which we call {\em robustness}. In classical terms, robustness can be formulated as the fact that a given property must be inherited by all the connected spanning subgraphs of the original graph. Significantly, this concept admits several possible interpretations, including the dynamic interpretation developed here. A more conventional, almost direct interpretation is that some edges in a classical (static) network are subject to permanent failure at some point, and the network is to be operated so long as it remains connected. While this interpretation is more intuitive and familiar, we insist on the fact that the dynamic interpretation of robustness is what makes its study compelling, for this notion arises naturally in class \TC, which is one of the most general class of dynamic networks imaginable. The reader may adopt either interpretation while going through the paper, keeping in mind that our results apply to both contexts and are therefore quite general.

\noindent\textbf{Contributions.}
We investigate the concept of {\em robustness} of a property, with a focus on the {\em maximal independent set} (MIS) problem, which consists of selecting a subset of nodes none of which are neighbors (independence) and such that no further node can be added to it (maximality). 
As it turns out, a robust MIS may or may not exist, and if it exists, it may or may not be computable locally depending on the considered graph ({\it resp.} footprint). 
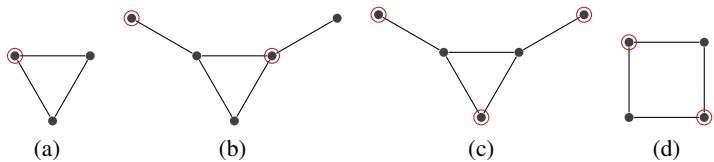
\begin{figure}[h]
  \centering
  \subfloat[]{
    \label{fig:mis-a}
    \begin{tikzpicture}
      \tikzstyle{every node}=[circle, inner sep=1.2pt, fill=darkgray]
      \path (0,0) node (a) {};
      \path (a)+(0:1) node (b) {};
      \path (a)+(-60:1) node (c) {};
      
      \draw (a)--(b)--(c)--(a);
      
      \tikzstyle{every node}=[circle, draw, red, inner sep=2pt]
      \path (a) node {};
    \end{tikzpicture}
  }
  ~
  \subfloat[]{
    \label{fig:mis-b}
    \begin{tikzpicture}
      \tikzstyle{every node}=[circle, inner sep=1.2pt, fill=darkgray]
      \path (0,0) node (a) {};
      \path (a)+(-30:1) node (b) {};
      \path (b)+(0:1) node (c) {};
      \path (b)+(-60:1) node (d) {};
      \path (c)+(30:1) node (e) {};
      
      \draw (a)--(b)--(c)--(e);
      \draw (b)--(d)--(c);
      
      \tikzstyle{every node}=[circle, draw, red, inner sep=2pt]
      \path (a) node {};
      \path (c) node {};
    \end{tikzpicture}
  }
  ~
  \subfloat[]{
    \label{fig:mis-c}
    \begin{tikzpicture}
      \tikzstyle{every node}=[circle, inner sep=1.2pt, fill=darkgray]
      \path (0,0) node (a) {};
      \path (a)+(-30:1) node (b) {};
      \path (b)+(0:1) node (c) {};
      \path (b)+(-60:1) node (d) {};
      \path (c)+(30:1) node (e) {};
      
      \draw (a)--(b)--(c)--(e);
      \draw (b)--(d)--(c);
      
      \tikzstyle{every node}=[circle, draw, red, inner sep=2pt]
      \path (a) node {};
      \path (d) node {};
      \path (e) node {};
    \end{tikzpicture}
  }
  ~
  \subfloat[]{
    \label{fig:mis-d}
    \begin{tikzpicture}
      \tikzstyle{every node}=[circle, inner sep=1.2pt, fill=darkgray]
      \path (0,0) node (a) {};
      \path (a)+(0:1) node (b) {};
      \path (b)+(-90:1) node (c) {};
      \path (c)+(-180:1) node (d) {};
      
      \draw (a)--(b)--(c)--(d)--(a);
      
      \tikzstyle{every node}=[circle, draw, red, inner sep=2pt]
      \path (a) node {};
      \path (c) node {};
    \end{tikzpicture}
  }
\vspace{-0.25cm}
  \caption{\label{fig:examples} Four examples of MISs in various graphs (resp. footprints).}
\end{figure}
For example, if the graph is a triangle (see Figure~\ref{fig:mis-a}), then only one MIS exists up to isomorphism, consisting of a single node. However, this set is no longer maximal in one of the possible connected spanning subgraphs (\eg after removing an adjacent edge to the selected node). Therefore, the triangle graph
admits no robust MIS. Some graphs do admit a robust MIS, but not all of the MISs are robust. 
Figures~\ref{fig:mis-b} and~\ref{fig:mis-c} show two MISs
in the bull graph, only one of which is robust. Finally,
 some graphs like the square graph (Figure~\ref{fig:mis-d}) are such that {\em all} MISs are robust. Although the last two examples seem to suggest that robust MISs are related to {\em maximum} MISs, being maximum is actually neither a necessary nor a sufficient condition.

In this paper, we characterize the class of graphs such that all MISs are robust, denoted \forallMIS. 
We prove that \forallMIS consists {\em exactly} of the union of complete bipartite graphs and a new class of graphs called {\em sputniks}, which contains among others all the trees (for which any property is trivially robust). While the sufficient side is easy to establish, proving that these graphs are the only ones is more difficult. Interestingly,
while the best known algorithms for deterministic distributed MIS in general graphs are superlogarithmic in the number of nodes $n$, namely they take $2^{\mathcal{O}(\sqrt{\log n})}$ rounds~\cite{PS96} (better randomized algorithms are known~\cite{LW11}), graphs in \forallMIS turn out to be specific enough to find an MIS (robust by definition) by using only information available within a sublogarithmic distance. We present an algorithm that first settles specific subsets of the networks using information available within {\em constant} distance, the residual instance being a disjoint union of trees. The residual instance can then be given to state-of-the-art algorithms like Barenboim and Elkin's for graphs of bounded arboricity~\cite{BE10}, which is known to use only information within distance $\mathcal{O}(\log n/\log \log n)$. An added benefit of this reduction is that any further progress on the MIS problem on trees will automatically transpose to robust MISs in \forallMIS. (Note that we deliberately do not use the terms ``rounds'' or ``time'', due to the non equivalence of locality and time in the context of a footprint.)

Next, we turn our attention to general graphs and ask whether a robust MIS can be found (if one exists) using only local information. We answer negatively, proving an $\Omega(n)$ lower bound on the locality of the problem. This result implies a separation between the MIS problem and the robust MIS problem in general graphs, since the former is feasible within $2^{\mathcal{O}(\sqrt{\log n})}$ hops~\cite{PS96}. It also implies that no strategy is essentially better than collecting the network at a single node and subsequently solving the problem in an offline manner. Motivated by this observation, we consider the offline problem of finding a robust MIS in a given graph if one exists (and rejecting otherwise).
The trivial strategy amounts to enumerating all MISs until a robust one is found, however there may be exponentially many MISs in general graphs (Moon and Moser~\cite{MM65}, see also~\cite{F87,GGG88} for an extension to the case of connected graphs). We present a polynomial time algorithm for computing a robust MIS in any given graph (if one exists). Our algorithm relies on a particular decomposition of the graph into a tree of biconnected components (${\cal ABC}$-tree), along which constraints are propagated about the MIS status of special nodes in between the components. The inner constraints of non trivial components are solved by reduction to the 2-SAT problem (which {\em is} tractable). As a by-product, the set of instances for which a robust MIS is found characterizes the existential analogue of \forallMIS, that is the class \existsMIS of all graphs that {\em admit} a robust MIS. (Whether a closer characterization exists is left as an open question.)

\noindent\textbf{Further discussion on the dynamic interpretation.}
As pointed out, in a dynamic network there is no way to distinguish between recurrent and non recurrent edges,
therefore the nodes cannot learn the {\em eventual} footprint \cite{DLLP16} (this observation is the very basis of the notion of
robustness). Now, what about the union of both types of edges, that is, the footprint itself? Clearly, the footprint can never be {\em decided} in a definitive sense by the nodes, since some edges may appear arbitrary late for the first time. However, it is also clear that every edge of the footprint {\em will} eventually appear; thus, over time the nodes can learn the footprint in a stabilized way, by updating their representation as new edges are detected. It is therefore possible to update some structure or property that eventually relates to the correct footprint. (Alternatively, one may assume simply that prior information about the footprint is given to the nodes, or that an oracle informs the nodes once every edge of the footprint has appeared.) Again, the reader is free to ignore the dynamic interpretation if the static one makes for a sufficient motivation.

\noindent\textbf{Outline.} Section~\ref{sec:definitions} presents the main definitions and concepts. Then, we characterize in Section~\ref{sec:forallMIS} the class \forallMIS and present a dedicated MIS algorithm that requires only information up to a sublogarithmic number of hops. Section~\ref{sec:existsMIS} establishes the non-locality of the problem in general and describes a tractable algorithm that computes a robust MIS in a given graph if one exists. Section~\ref{sec:conclusion} concludes with some remarks.

\section{Main concepts and definitions}
\label{sec:definitions}

Many of the concepts presented in the introduction, including that of temporal paths, footprint, or classes of dynamic networks are not defined here. The authors believe that the informal descriptions given in introduction are sufficient to understand the dynamic interpretation of the results. (If that is not the case, the reader is referred to~\cite{CFQS12} for thorough definitions using the time-varying graph formalism.) Our results themselves are formulated using standard concepts of graph theory, making them independent from both interpretations.

\subsection{Basic definitions}
Let $G=(V,E)$ be an undirected graph, with $V$ the set of nodes (vertices) and $E$ the set of bidirectional communication links (edges). We denote by $n=|V|$ the number of nodes in the graph, and by $D$ the {\em diameter} of the graph, that is, the length of the longest shortest path in $G$ over all possible pairs of nodes. We denote by $N(v)$ the neighborhood of a vertex $v$, which is the set of vertices $\{w : \{v,w\}\in E\}$. The degree of a vertex $v$ is $|N(v)|$. A vertex is {\em pendant} if it has degree~$1$. A {\em cut vertex} (or {\em articulation point}) is a vertex whose removal disconnects the graph. A {\em cut edge} (or {\em bridge}) is an edge whose removal disconnects the graph. We say that an edge is {\em removable} if it is not a cut edge. 
A {\em spanning connected subgraph} of a graph $G=(V_G,E_G)$ is a graph $H=(V_H, E_H)$ such that $V_H = V_G$, $E_H \subseteq E_G$, and $H$ is connected. 
In the most general variant, we define {\em robustness} as follows.

\begin{definition}[Robustness]
\label{def:robustness}
A property $P$ is said to be {\em robust} in $G$ if and only if it is satisfied in every connected spanning subgraph of $G$ (including $G$ itself).
\end{definition}

In other words, a robust property holds even after an arbitrary number of edges are removed without disconnecting the graph. Robustness is a special case of hereditary property, and more precisely a special case of decreasing monotone property (see for instance~\cite{K88}). 
In this paper, 
we focus on the {\em maximal independent set} (MIS) problem. An MIS is a set of nodes such that no two nodes in the set are neighbors and the set is maximal for the inclusion relation.
Following Definition~\ref{def:robustness}, a robust MIS in a graph $G$ (RMIS, for short) is an MIS that remains {\em maximal} and {\em independent} in every connected spanning subgraph of $G$. Observe that independence is stable under the removal of edges; therefore, it is sufficient that the MIS be maximal in all these subgraphs in order to be an RMIS. We define two classes of graphs related to the robustness of MISs.

\begin{definition}[\forallMIS]
  This class is the set of all graphs in which all MISs are robust.
\end{definition}

\vspace{-0.4cm}

\begin{definition}[\existsMIS]
  This class is the set of all graphs that admit at least one robust MIS.
\end{definition}

We define the distributed problem of computing an RMIS in a given graph as follows.

\begin{definition}[\RMIS problem]
  \label{def:rmis-problem}
  Given a graph $G$ and an algorithm ${\cal A}$ executed at every node of $G$, ${\cal A}$ solves \RMIS on $G$ iff every node eventually terminates by outputting IN or OUT, and the set of nodes outputting IN forms an RMIS on $G$. Algorithm ${\cal A}$ solves \RMIS in a class of graphs ${\cal C}$ iff for all $G \in {\cal C}$, ${\cal A}$ solves \RMIS on $G$. 
\end{definition}

Finally, let us define two classes of graphs that turn out to be closely related to RMISs, namely {\em complete bipartite graphs} and {\em sputnik graphs}. The latter is introduced here for the first time.

\begin{definition}[Complete bipartite graph]
A complete bipartite graph is a graph $G=(V_1 \cup V_2,E)$ such that $V_1 \cap V_2 = \emptyset$ and $E = V_1 \times V_2$. In words, the vertices can be partitioned into two sets $V_1$ and $V_2$ such that every vertex in $V_1$ shares an edge with every vertex in $V_2$ (completeness), and these are the only edges (bipartiteness).
\end{definition}

\vspace{-0.4cm}

\begin{definition}[Sputnik]
  A graph is a {\em sputnik} iff every vertex belonging to a cycle also has a pendant neighbor. 
\LONG{
(An example of sputnik is shown in Figure~\ref{fig:LNF}.)
}
\end{definition}

\subsection{Computational model}
Based on the chosen interpretation of our results, the base graph in the above definitions refers
either to the footprint of a dynamic network, or to the network itself. 
In the dynamic case, the actual timing of the edges is arbitrary, so the classical equivalence between time and locality in synchronous network does not hold. Nonetheless, we rely on the ${\cal LOCAL}$ model~\cite{L92,NS95} to describe the algorithms. 
To avoid confusion between locality and time in the dynamic case, we always state the complexities in terms of locality, saying that an algorithm (or problem) is $\mathcal{O}(f(G))$-local if it can be solved in $\mathcal{O}(f(G))$ rounds in the ${\cal LOCAL}$ model. (Other terminologies include saying that such problems are in $LD(f(G))$\,\cite{FKP11}.) 
For completeness, let us recall the main features of the $\mathcal{LOCAL}$ model. 
In this model, the nodes operate in synchronous discrete rounds and they wake up simultaneously. In each round, a node can exchange messages of arbitrary 
size with its neighbors and perform some local (typically unrestricted) computation.
The complexity of an algorithm over a class of graphs is the maximum number of rounds, taken over all graphs of this class, performed until all nodes have terminated. In the dynamic interpretation of our results, the algorithms are seen as being {\em restarted} every time the local knowledge of the footprint changes.

\section{Characterization of \forallMIS and locality of \RMIS}
\label{sec:forallMIS}

In this section, we show that \forallMIS, the class of graphs in which all MISs are robust, corresponds exactly to the union of complete bipartite graphs and sputnik graphs. Then we present an algorithm that solves \RMIS in \forallMIS using information available only within a sublogarithmic number of hops in $n$.

\subsection{Characterization of \forallMIS}

We first show that all MISs are robust in complete bipartite graphs and in sputnik graphs. Due to space limitation, the proofs of the two following lemmas are postponed to Appendix \ref{sec:missing}. They follow easily from the very definition of RMIS and of these classes of graphs.

\begin{lemma}
  \label{lem:BK}
  All MISs are robust in complete bipartite graphs.
\end{lemma}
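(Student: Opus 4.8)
The plan is to first pin down exactly what the MISs of a complete bipartite graph are, and then verify robustness directly against the definition. First I would observe that in $G = (V_1 \cup V_2, E)$ with $E = V_1 \times V_2$, any independent set must lie entirely within $V_1$ or entirely within $V_2$: since every pair consisting of one vertex from each part is an edge, no independent set can contain vertices from both parts simultaneously. Consequently, when both parts are nonempty, the only two MISs are $V_1$ and $V_2$ themselves. Indeed, each is independent (there are no intra-part edges), and neither can be extended, because any vertex of the opposite part is adjacent to every vertex already selected.

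Next I would fix one of these MISs, say $V_1$, together with an arbitrary connected spanning subgraph $H$ of $G$, and check that $V_1$ is still a valid MIS in $H$. Independence is immediate: removing edges cannot create new adjacencies, and as already noted in the text, independence is always preserved under edge removal. The crux is maximality. The key point is that every $w \in V_2$ retains at least one neighbor in $V_1$ within $H$: because $H$ is connected and spanning on at least two vertices, $w$ has degree at least one in $H$, and by bipartiteness all of $w$'s neighbors in $G$—hence in $H$—belong to $V_1$. Therefore $w$ cannot be added to $V_1$, so $V_1$ stays maximal, and thus remains a valid MIS in every connected spanning subgraph. The same reasoning applies symmetrically to $V_2$.

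Since $V_1$ and $V_2$ are the only MISs of $G$ and both are robust, it follows that all MISs of a complete bipartite graph are robust. I would dispose of the degenerate cases in one line: a single-vertex graph is trivial, and a complete bipartite graph in which one part is empty has no edges, so it is connected only when it consists of a single vertex. The main (and really quite minor) obstacle is making the degree-at-least-one claim precise, since this is exactly the step where robustness would fail if some vertex of $V_2$ could become isolated—but the requirement that $H$ be a \emph{connected} spanning subgraph forbids precisely that, which is why the argument goes through.
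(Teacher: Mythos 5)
Your proof is correct and follows essentially the same route as the paper's: identify $V_1$ and $V_2$ as the only MISs, note independence is preserved under edge removal, and argue that connectivity of the spanning subgraph forces each vertex of the unselected part to retain a neighbor in the selected part, preserving maximality. You simply spell out more carefully the steps the paper leaves implicit (why the parts are the only MISs, and the degenerate one-part cases).
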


\vspace{-0.4cm}

\begin{lemma}
  \label{lem:sputniks}
  All MISs are robust in sputnik graphs.
\end{lemma}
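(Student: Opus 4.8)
The plan is to use the reduction noted just above the lemma: since independence is preserved under edge deletion, it suffices to prove that any fixed MIS $M$ of a sputnik $G$ remains \emph{maximal} in every connected spanning subgraph $H$. Concretely, maximality of $M$ in $H$ amounts to the statement that every vertex $v \notin M$ still has a neighbor in $M$ among the edges of $H$. The two structural facts I would rely on are that the only edges capable of disappearing in a connected spanning subgraph are the removable ones, and that an edge is removable precisely when it lies on a cycle; in particular, the unique edge incident to a pendant vertex is a cut edge and therefore survives in every connected spanning subgraph.

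The heart of the argument is the following claim: for every $v \notin M$, there is a neighbor $u \in M$ for which $\{u,v\}$ is a cut edge. This claim immediately yields the lemma, because a cut edge is present in all connected spanning subgraphs, so $v$ retains the neighbor $u \in M$ no matter which removable edges are deleted; hence $M$ stays maximal, and therefore robust. To prove the claim I would split on whether $v$ lies on a cycle. If $v$ lies on no cycle, then every edge incident to $v$ is a cut edge, and any neighbor of $v$ in $M$ (one exists by maximality) is reached through a cut edge. If $v$ lies on a cycle, the defining property of sputniks gives $v$ a pendant neighbor $p$; since the only neighbor of $p$ is $v$ and $v \notin M$, maximality forces $p \in M$ (otherwise $p$ would have no neighbor in $M$ at all), and $\{p,v\}$ is a cut edge because $p$ is pendant. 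In both cases the desired $M$-neighbor across a cut edge exists.

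I do not anticipate a genuine obstacle: the argument is short and essentially definitional. The only points demanding care are getting the case split right and correctly invoking the equivalence ``removable edge $\iff$ edge on a cycle,'' together with the observation that a pendant vertex forces its unique neighbor into any MIS. Combining these with the reduction from robustness to maximality completes the argument for an arbitrary MIS, and hence establishes that all MISs are robust in every sputnik.
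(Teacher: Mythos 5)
Your proof is correct and takes essentially the same route as the paper's: both rest on the facts that removable edges lie on cycles (so any vertex whose covering edge could disappear has a pendant neighbor), that pendant edges are cut edges surviving in every connected spanning subgraph, and that maximality forces a pendant vertex or its unique neighbor into the MIS. Your per-vertex formulation (every $v \notin M$ retains an $M$-neighbor across a cut edge) is if anything slightly more explicit than the paper's edge-removal phrasing about why the argument covers the simultaneous deletion of many edges.
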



We now prove the stronger result that if a graph is such that all possible MISs are robust, then it {\em must} be either a bipartite complete graph or a sputnik. 

\begin{lemma} 
  \label{lem:necessary}
  If $G$ is not a sputnik, and yet every MIS in $G$ is robust, then $G$ is bipartite complete.
\end{lemma}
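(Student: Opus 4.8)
I would first turn robustness into a pure connectivity statement, then dispose of the $2$-connected case cleanly, and finally use the non-sputnik witness to reduce to that case. Throughout, write $E_v=\{\,vw : w\in N(v)\cap S\,\}$ for the edges joining a vertex $v\notin S$ to the MIS $S$.

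\textbf{Step 1: a connectivity reformulation.} Since independence is preserved under edge deletion, an MIS $S$ fails to be robust exactly when some connected spanning subgraph leaves a vertex $v\notin S$ with no neighbour in $S$; the most permissive such subgraph is $G$ with all of $E_v$ deleted. Hence $S$ is robust iff for every $v\notin S$ the graph $G\setminus E_v$ is disconnected. I would record the two regimes this produces. If $v$ is \emph{not} a cut vertex, then $G-v$ stays connected, so $G\setminus E_v$ is disconnected iff $v$ has no neighbour outside $S$, \ie $N(v)\subseteq S$. If $v$ \emph{is} a cut vertex with $G-v=C_1\cup\cdots\cup C_m$, then $G\setminus E_v$ is disconnected iff some branch $C_i$ satisfies $N(v)\cap C_i\subseteq S$ (that branch can no longer reach $v$).

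\textbf{Step 2: the $2$-connected core.} Suppose $G$ is $2$-connected. Any vertex of positive degree can be excluded from \emph{some} MIS (extend one of its neighbours greedily to an MIS), so by Step~1 every MIS $S$ satisfies $N(v)\subseteq S$ for all $v\notin S$. Combined with the independence of $S$, this forces every edge to cross the partition $(S,V\setminus S)$; thus $G$ is bipartite and $(S,V\setminus S)$ is a bipartition. A connected bipartite graph has a unique bipartition $\{A,B\}$, so $S\in\{A,B\}$ and the only MISs are $A$ and $B$. Finally, if some $a\in A$ and $b\in B$ were non-adjacent, extending $\{a,b\}$ would yield a third MIS containing one vertex of each side; hence every such pair is adjacent and $G$ is complete bipartite. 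This settles the statement whenever $G$ is $2$-connected.

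\textbf{Step 3: reduction via the witness, and the main obstacle.} It remains to force $2$-connectivity, which is where the hypothesis that $G$ is not a sputnik enters: fix a vertex $c$ lying on a cycle and having \emph{no} pendant neighbour. The key leverage is that, because $c$ has no pendant neighbour, every neighbour $u$ of $c$ has some other neighbour $u'\neq c$ (in $u$'s own branch of $G-c$), so $u$ can always be dominated from within its branch and therefore excluded from a suitable MIS. I would first show $c$ is not a cut vertex: otherwise the cycle through $c$ lies in a single branch $C_1$, giving $c$ two neighbours there; using those two neighbours to simultaneously dominate $c$ and spare one of them, while excluding a neighbour of $c$ in every other branch, one builds an MIS $S\not\ni c$ for which no branch has $N(c)\cap C_i\subseteq S$, so $G\setminus E_c$ is connected and $S$ is not robust (contradiction, by Step~1). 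With $c$ non-cut, Step~1 gives $N(c)\subseteq S$ for every MIS avoiding $c$; but the same severing idea shows that if $G$ is \emph{not} complete bipartite one can still exclude some neighbour of $c$ from an MIS that avoids $c$, contradicting $N(c)\subseteq S$. The main obstacle is precisely this construction's bookkeeping and its global propagation: proving that the attempt to leave a neighbour of $c$ outside the MIS succeeds \emph{unless} $G$ is complete bipartite, and that no cut vertex can survive anywhere once a pendant-free cycle vertex exists. The conceptual point is that a pendant neighbour is exactly the unremovable bridge that pins a cycle vertex to $S$ and rescues robustness (the mechanism behind Lemma~\ref{lem:sputniks}); its absence at $c$ is what lets the severing construction go through, reducing every case to the complete bipartite conclusion of Step~2.
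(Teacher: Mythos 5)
Your Steps~1 and~2 are sound: the reformulation ($S$ is robust iff $G\setminus E_v$ is disconnected for every $v\notin S$) is exactly the characterization the paper itself borrows from~\cite{DKP15} in its appendix, and the $2$-connected case is handled correctly. The genuine gap is Step~3 --- the only place where the non-sputnik hypothesis and non-$2$-connected graphs are treated --- and you flag it yourself as the ``main obstacle'' without resolving it. Worse, the construction you propose there fails as stated. Consider $c$ on a $4$-cycle $c\,v_1\,d\,v_2$ with $N(v_1)=N(v_2)=\{c,d\}$, and attach a path $c\,w\,w'$ at $c$. Then $c$ is a cut vertex lying on a cycle with no pendant neighbour, yet \emph{no} MIS avoiding $c$ leaves a neighbour of $c$ outside $S$ in both branches: you cannot use $w$ as the dominator of $c$ (then $N(c)\cap C_2=\{w\}\subseteq S$), so exactly one of $v_1,v_2$ must be in $S$, say $v_1$; but then $v_2$'s only other dominators $c$ and $d$ are both excluded ($d$ is adjacent to $v_1$), so maximality forces $v_2\in S$ --- contradiction. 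This graph does violate the hypothesis, but the non-robust MIS is $\{d,w\}$ and its robustness failure is witnessed at $v_1$ (delete the non-bridge edge $\{v_1,d\}$), \emph{not} at $c$. So your claim that the severing construction at $c$ ``succeeds unless $G$ is complete bipartite'' is false: here $G$ is not complete bipartite, yet the attempt localized at $c$ is impossible, and the contradiction must be hunted at a different vertex.

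What is missing is the dichotomy that drives the paper's proof (its Claim~1): for the witness $u$, either \emph{all} neighbours of $u$ in the cycle-containing branch share one and the same neighbourhood, or there are neighbours $v_1,v_2$ and some $x\in N(v_1)\setminus N(v_2)$, in which case an MIS containing $v_2$ and $x$ (together with, in each other branch $X_i$, a non-$u$ neighbour of a designated $w_i\in N(u)$) is non-robust --- keep only the edges $\{u,v_1\}$ and $\{u,w_i\}$. In the equal-neighbourhood alternative the property propagates: neighbours of $u$ have no pendant neighbours (a pendant of $v_1$ would also neighbour $v_2$) and lie on $4$-cycles through $u$, so the same argument applies to them; this simultaneously rules out $u$ being an articulation point (a vertex $w$ in $u$'s cycle branch cannot satisfy $N(w)=N(u)$ if $N(u)$ meets another branch) and forces complete bipartiteness in one sweep. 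Note that the paper never establishes $2$-connectivity as an intermediate step, whereas your plan needs the global statement ``no cut vertex survives anywhere'' before Step~2 can be invoked --- and that is precisely the propagation you left unproven. With the neighbourhood-equality propagation in hand, your Step~2 becomes redundant; without it, Step~3 does not close, so the proposal as written has a genuine gap at its core.
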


\begin{proof}
  If $G$ is not a sputnik, then some node $u$ in a cycle $C \subseteq G$ has no pendant neighbor. In general, $u$ may be an articulation point, and so the graph $G \setminus \{u\}$ may result in several components. Let $X_1, X_2, \dots$ be the resulting components with vertex $u$ back in each of them. In particular, let $X_1$ be the one that contains $C$ and observe that $X_1$ contains at least $3$ vertices (cycle). The other components, if they exist, all contain at least two vertices other than $u$ (otherwise $u$ would have a pendant neighbor). 

\noindent\textbf{Claim 1:} If all MISs in $G$ are robust, then all neighbors of $u$ in $X_1$ have the same neighborhood.

We prove this claim by contradiction.
Let two neighbors $v_1, v_2$ of $u$ be such that $N(v_1) \ne N(v_2)$. We will show that at least one MIS is not robust. Without loss of generality, assume that some vertex $x$ belongs to $N(v_1) \setminus N(v_2)$. Then we can build an MIS that contains both $v_2$ and $x$ (as a special case, $x$ may be the same vertex as $v_2$, but this is not a problem). For each of the components $X_{i\ge 2}$, choose an edge $\{u,w_i\} \in X_i$ and add another neighbor of $w_i$ to the MIS (such a neighbor exists, as we have already seen). One can see that $u$, $v_1$ and all $w_i$ can no longer enter the MIS because they all have neighbors in it. Now, choose the remaining elements of the MIS arbitrarily. We will show that the resulting MIS is not robust, by consider the removal of edges as follows. In all components $X_{i\ge 2}$, remove {\em all} edges incident to $u$ except $\{u,w_i\}$; and in $X_1$, remove all edges incident to $u$ except $\{u,v_1\}$. The resulting graph remains connected, by definition, since each of the $X_i \setminus \{u\}$ is connected. And yet, $u$ no longer has a neighbor in the MIS, which contradicts robustness.\hfill$\blacksquare$

Now, Claim 1 implies that none of $u$'s neighbors in $X_1$ has a pendant neighbor (since their neighborhoods are the same). As a result, the arguments that applied to $u$ because of its absence of pendant neighbors, apply in turn to $u$'s neighbors in $X_1$. In particular, it means that if some node $v$ is neighbor to $u$ in $X_1$, then all neighbors of $v$ (including $u$) must have the same neighborhood. Therefore, $u$ cannot be an articulation point and we are left with the single component $X_1$, in which all neighbors of $u$ have the same neighbors and these neighbors in turn have the same neighbors, which implies that the graph is complete bipartite.
\end{proof}

Based on Lemmas~\ref{lem:BK}, \ref{lem:sputniks}, and \ref{lem:necessary}, we conclude with the following theorem.

\begin{theorem}
  \label{th:forallMIS}
  All MISs are robust in a graph $G$ if and only if $G$ is complete bipartite or sputnik.
\end{theorem}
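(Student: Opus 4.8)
The plan is to establish the biconditional by treating its two implications separately, since all the substantive work has already been delegated to the three preceding lemmas, leaving the theorem itself as an assembly step. For the sufficiency direction (the ``if'' part), I would assume $G$ is complete bipartite or sputnik and split into the two corresponding cases: if $G$ is complete bipartite, invoke Lemma~\ref{lem:BK}; if $G$ is a sputnik, invoke Lemma~\ref{lem:sputniks}. In either case every MIS of $G$ is robust, so this direction closes immediately.

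For the necessity direction (the ``only if'' part), I would assume that all MISs in $G$ are robust and argue that $G$ must be complete bipartite or a sputnik. Here I would perform a trivial dichotomy on whether $G$ is a sputnik. If it is, there is nothing further to prove, since ``sputnik'' is one of the two admissible conclusions. If it is not, then the standing hypothesis that every MIS is robust, combined with Lemma~\ref{lem:necessary}, forces $G$ to be complete bipartite, which is the other admissible conclusion. Thus the two branches of the dichotomy exhaust the possibilities and yield the disjunction claimed.

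Because the entire content of the theorem is packaged into the three lemmas, there is essentially no obstacle at the level of the theorem statement itself; the only care needed is to keep the logical combination airtight. In particular, one should note that Lemma~\ref{lem:necessary} is already phrased in exactly the contrapositive-friendly form (``not sputnik \emph{and} all MISs robust $\Rightarrow$ complete bipartite''), so that the case split in the necessity direction matches it verbatim and no additional contraposition or rephrasing is required. The genuine difficulty lies upstream, in Lemma~\ref{lem:necessary}, whose proof exhibits a concrete non-robust MIS whenever a cycle vertex lacks a pendant neighbor and the uniform-neighborhood condition fails; that is where the combinatorial effort resides, and the present theorem merely harvests it.
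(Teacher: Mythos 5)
Your proposal is correct and matches the paper exactly: the paper likewise derives the theorem directly from Lemmas~\ref{lem:BK}, \ref{lem:sputniks}, and~\ref{lem:necessary}, with the sufficiency split into the two classes and the necessity handled by the same dichotomy on whether $G$ is a sputnik, Lemma~\ref{lem:necessary} covering the non-sputnik case. Your observation that Lemma~\ref{lem:necessary} is already stated in the form needed for this case split, so the theorem is pure assembly, is also how the paper treats it.
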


\subsection{\RMIS is locally solvable in \forallMIS}
\label{sec:forallMIS-local}

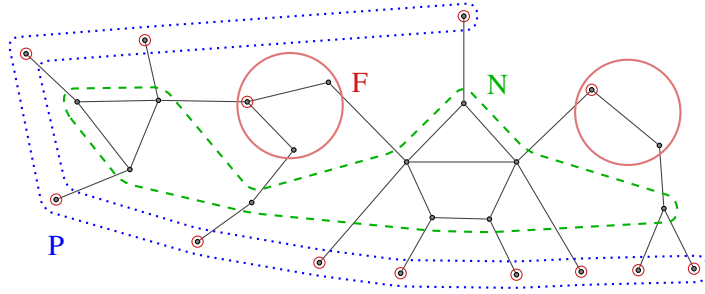
\begin{figure}
\begin{center}
\begin{tikzpicture}[xscale=1,yscale=1]
  \tikzstyle{every node}=[defnode]
  \path (1.5,10.3) node [] (v0) {};
  \path (3.08,10.48) node [] (v3) {};
  \path (1.9,8.36) node [] (v5) {};
  \path (3.78,7.8) node [] (v9) {};
  \path (6.48,7.38) node [] (v16) {};
  \path (8.02,7.36) node [] (v17) {};
  \path (7.32,10.8) node [] (v18) {};
  \path (5.4,7.52) node [] (v19) {};
  \path (8.88,7.4) node [] (v20) {};
  \path (9.64,7.42) node [] (v24) {};
  \path (10.38,7.44) node [] (v25) {};

  \path (2.18,9.66) node [] (v1) {}; 
  \path (3.26,9.68) node [] (v2) {}; 
  \path (2.88,8.76) node [] (v4) {}; 
  \path (6.56,8.86) node [] (v11) {}; 
  \path (7.32,9.64) node [] (v12) {}; 
  \path (8.02,8.86) node [] (v13) {}; 
  \path (6.9,8.12) node [] (v14) {}; 
  \path (7.66,8.1) node [] (v15) {}; 
  \path (4.5,8.32) node [] (v8) {}; 
  \path (9.98,8.24) node [] (v23) {}; 

  \path (4.44,9.66) node [] (v6) {}; 
  \path (9.02,9.82) node [] (v21) {}; 
  \path (5.06,9.02) node [] (v7) {}; 
  \path (5.52,9.92) node [] (v10) {}; 
  \path (9.92,9.08) node [] (v22) {}; 
  \tikzstyle{every path}=[];
  \draw [darkgray] (v0)--(v1);
  \draw [darkgray] (v1)--(v2);
  \draw [darkgray] (v2)--(v4);
  \draw [darkgray] (v4)--(v1);
  \draw [darkgray] (v5)--(v4);
  \draw [darkgray] (v2)--(v3);
  \draw [darkgray] (v2)--(v6);
  \draw [darkgray] (v7)--(v6);
  \draw [darkgray] (v9)--(v8);
  \draw [darkgray] (v8)--(v7);
  \draw [darkgray] (v6)--(v10);
  \draw [darkgray] (v10)--(v11);
  \draw [darkgray] (v11)--(v12);
  \draw [darkgray] (v12)--(v13);
  \draw [darkgray] (v13)--(v11);
  \draw [darkgray] (v11)--(v14);
  \draw [darkgray] (v14)--(v15);
  \draw [darkgray] (v15)--(v13);
  \draw [darkgray] (v16)--(v14);
  \draw [darkgray] (v17)--(v15);
  \draw [darkgray] (v18)--(v12);
  \draw [darkgray] (v19)--(v11);
  \draw [darkgray] (v20)--(v13);
  \draw [darkgray] (v21)--(v13);
  \draw [darkgray] (v22)--(v21);
  \draw [darkgray] (v23)--(v22);
  \draw [darkgray] (v23)--(v24);
  \draw [darkgray] (v23)--(v25);

  \tikzstyle{every node}=[mis]
  \path (v18) node [] {};
  \path (v3) node [] {};
  \path (v0) node [] {};
  \path (v5) node [] {};
  \path (v9) node [] {};
  \path (v16) node [] {};
  \path (v17) node [] {};
  \path (v19) node [] {};
  \path (v20) node [] {};
  \path (v24) node [] {};
  \path (v25) node [] {};
  \path (v6) node [] {}; 
  \path (v21) node [] {}; 

  \path (v18) coordinate[xshift=5pt,yshift=5pt] (v18ne);
  \path (v3) coordinate[yshift=5pt] (v3n);
  \path (v0) coordinate[xshift=-7pt,yshift=6pt] (v0nw);
  \path (v5) coordinate[xshift=-5pt,yshift=-4pt] (v5sw);
  \path (v9) coordinate[yshift=-5pt] (v9s);
  \path (v19) coordinate[yshift=-5pt] (v19s);
  \path (v16) coordinate[yshift=-5pt] (v16s);
  \path (v17) coordinate[yshift=-5pt] (v17s);
  \path (v20) coordinate[yshift=-5pt] (v20s);
  \path (v24) coordinate[yshift=-5pt] (v24s);
  \path (v25) coordinate[xshift=5pt,yshift=-5pt] (v25se);
  \path (v25) coordinate[xshift=5pt,yshift=5pt] (v25ne);
  \path (v24) coordinate[yshift=5pt] (v24n);
  \path (v20) coordinate[yshift=5pt] (v20n);
  \path (v17) coordinate[yshift=5pt] (v17n);
  \path (v16) coordinate[yshift=5pt] (v16n);
  \path (v19) coordinate[yshift=5pt] (v19n);
  \path (v9) coordinate[yshift=5pt] (v9n);
  \path (v5) coordinate[xshift=4pt,yshift=4pt] (v5ne);
  \path (v0) coordinate[xshift=4pt,yshift=-3pt] (v0se);
  \path (v3) coordinate[yshift=-5pt] (v3s);
  \path (v18) coordinate[xshift=5pt,yshift=-5pt] (v18se);

  \draw[thick,blue,dotted,rounded corners] (v18ne)--(v3n)--(v0nw)--(v5sw)--(v9s)--(v19s)--(v16s)--(v17s)--(v20s)--(v24s)--(v25se)--(v25ne)--(v24n)--(v20n)--(v17n)--(v16n)--(v19n)--(v9n)--(v5ne)--(v0se)--(v3s)--(v18se)--cycle;

  \path (v1) coordinate[xshift=-5pt,yshift=-3pt] (v1sw);
  \path (v4) coordinate[xshift=-3pt,yshift=-5pt] (v4sw);
  \path (v8) coordinate[yshift=-4pt] (v8s);
  \path (v14) coordinate[yshift=-5pt] (v14s);
  \path (v15) coordinate[yshift=-5pt] (v15s);
  \path (v23) coordinate[xshift=5pt,yshift=-5pt] (v23se);
  \path (v23) coordinate[xshift=5pt,yshift=4pt] (v23ne);
  \path (v13) coordinate[xshift=4pt,yshift=4pt] (v13ne);
  \path (v12) coordinate[yshift=7pt] (v12n);
  \path (v11) coordinate[xshift=-4pt,yshift=4pt] (v11nw);
  \path (v8) coordinate[yshift=4pt] (v8n);
  \path (v2) coordinate[xshift=2pt,yshift=5pt] (v2n);
  \path (v1) coordinate[xshift=-5pt,yshift=5pt] (v1nw);

  \draw[thick,green,dashed,rounded corners] (v1sw)--(v4sw)--(v8s)--(v14s)--(v15s)--(v23se)--(v23ne)--(v13ne)--(v12n)--(v11nw)--(v8n)--(v2n)--(v1nw)--cycle;

  \draw[thick,red!60] (v6)+(.57,-.05) circle (.7);

  \draw[thick,red!60] (v21)+(.48,-.3) circle (.7);

  \tikzstyle{every node}=[]
  \path (v12n) node[right=5pt, green] {N};
  \path (v5) node[below=10pt, blue] {P};
  \path (v10) node[right=5pt, red] {F};

\end{tikzpicture} 
\vspace{-0.5cm}
\end{center}
\caption{A sputnik and its sets $P$ (dotted set), $N$ (dashed set), and $F$ (plain set).}\label{fig:LNF}
\end{figure}

We now prove that computing deterministically an RMIS in class \forallMIS can be done locally, by presenting a distributed algorithm that computes a (regular) MIS using only information available within $o(\log n)$-hops a sublogarithmic number of hops in $n$. By definition of the class, this MIS is robust. 
Informally, the algorithm proceeds as follows (due to space limitations, the pseudo-code and the formal proof of the algorithm are moved to Appendix \ref{sec:missing}). Class \forallMIS consists of exactly the union of bipartite complete graphs and sputniks (Theorem~\ref{th:forallMIS}). First, the nodes decide if the graph is complete bipartite by looking within a constant number of hops (three). If so, membership to the MIS is decided according to some convention (\eg all nodes in the same part as the smallest identifier are in the MIS). Otherwise, the graph {\em must} be a sputnik and every node decides (without more information) which of the following three cases it falls into: 1) it is a pendant node (set $P$ in Figure~\ref{fig:LNF}), 2) it is not a pendant node but has at least one pendant neighbor (set $N$), or 3) none of the two cases apply (set $F$). In the first case, it enters the MIS, while in the second it decides not to. We prove that the set of nodes falling into the third case does form a disjoint union of trees, each of which can consequently be solved by state-of-the-art algorithms. In particular, Barenboim and Elkin~\cite{BE10} present a $\mathcal{O}(\log n/\log \log n)$-local algorithm that solves MIS in graphs of bounded arboricity (and a fortiori trees).
On the negative side, we show (using standard arguments) that Linial's $\Omega(\log^* n)$ lower bound for $3$-coloring~\cite{L92} in cycles extends to \RMIS in class \forallMIS, leading to the following theorem.

\begin{theorem}\label{th:localityforall}
\RMIS is $\Omega(\log^* n)\cap\mathcal{O}(\log n / \log \log n)$-local in class \forallMIS.
\end{theorem}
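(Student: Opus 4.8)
The plan is to establish the two sides of the claim independently: an $\mathcal{O}(\log n/\log\log n)$-local algorithm for the upper bound, and an indistinguishability argument for the $\Omega(\log^* n)$ lower bound. For the upper bound, I would exploit Theorem~\ref{th:forallMIS}, which tells us that any $G\in\forallMIS$ is either complete bipartite or a sputnik, and handle the two cases locally. First, every node collects its radius-$3$ ball and tests whether $G$ is complete bipartite. Since a complete bipartite graph has diameter at most $2$, this local view is conclusive and, crucially, all nodes reach the same verdict: when $G$ is complete bipartite every radius-$2$ ball is already all of $V$, and otherwise no node ever sees a closed complete-bipartite ball. In the complete bipartite case every node sees the whole graph and outputs IN iff it lies in the part containing the globally smallest identifier, which is an MIS produced in $\mathcal{O}(1)$ rounds. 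Otherwise $G$ is a sputnik and each node decides in $2$ hops whether it is pendant (set $P$, output IN), non-pendant with a pendant neighbour (set $N$, output OUT), or neither (set $F$).

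The key structural step I would isolate is that $G[F]$ is a disjoint union of trees: in a sputnik every vertex lying on a cycle has a pendant neighbour and is therefore in $N$, so $F$ contains no cycle vertex and $G[F]$ is acyclic. I would then run the Barenboim--Elkin bounded-arboricity MIS algorithm~\cite{BE10} on $G[F]$ (a forest, arboricity $1$) in $\mathcal{O}(\log n/\log\log n)$ rounds, and argue that $P\cup(\text{MIS of }F)$ is a valid MIS of $G$. Independence follows because $P$ is independent (two adjacent pendants would form a $K_2$ component, hence the whole connected graph, which is complete bipartite and thus already handled by the other branch) and because no $P$--$F$ edge exists (an $F$-vertex adjacent to a pendant would itself lie in $N$). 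Maximality follows because each $N$-vertex has an IN pendant neighbour and each OUT vertex of $F$ has an IN neighbour inside $F$. Being an MIS of a graph in $\forallMIS$, the output is robust by definition, and the round complexity is dominated by the forest phase.

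For the lower bound, I would observe that lines (paths) are trees, hence sputniks, hence members of $\forallMIS$, and that by Lemma~\ref{lem:sputniks} \emph{every} MIS of a tree is robust; consequently solving \RMIS on a line is exactly solving MIS on a line. Since a node at distance more than $t$ from both endpoints has, after $t$ rounds, a view indistinguishable from that of a node on a long ring, an $o(\log^* n)$-round line algorithm could be run unchanged on a ring, producing a valid MIS there in $o(\log^* n)$ rounds and contradicting Linial's $\Omega(\log^* n)$ lower bound for symmetry breaking ($3$-colouring / MIS) on the ring~\cite{L92}. Hence the $\Omega(\log^* n)$ bound transfers to \RMIS on $\forallMIS$.

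I expect the main obstacle to be the correctness of the three-way decomposition in the sputnik case, \ie verifying that gluing the forced decisions on $P$ and $N$ to an MIS of the residual forest $F$ never breaks independence or maximality at the $P/N/F$ interface, together with the subtle point that complete-bipartiteness must be detectable within a \emph{constant} radius \emph{and} decided consistently by all nodes. By contrast, the lower bound is a routine adaptation of the standard ring argument once one notices that lines already lie in $\forallMIS$ and that RMIS coincides with MIS there.
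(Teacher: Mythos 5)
Your proposal is correct and follows essentially the same route as the paper: a $3$-hop complete-bipartite test with consistent verdicts, the $P/N/F$ partition in the sputnik case with the observation that $G[F]$ is a forest handed to Barenboim--Elkin~\cite{BE10}, and a reduction to Linial's $\Omega(\log^* n)$ bound on paths for the lower bound. The only (immaterial) difference is in packaging the lower bound: you lift a path algorithm to the ring via indistinguishability and invoke the known ring MIS bound, while the paper converts the path MIS into a $3$-coloring and uses the extension of Linial's $3$-coloring bound to paths.
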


\section{Nonlocality of \RMIS in general graphs and global resolution}
\label{sec:existsMIS}
In this section, we prove that the problem of computing deterministically an RMIS in general graphs, if one exists, is {\em not} local. Precisely, we first observe that {\em deciding} whether an RMIS exists is not a local problem; then, we prove a $\Omega(n)$ lower bound on the distance at which it might be necessary to look to solve the problem if an RMIS exists, where $n$ is the diameter of the network. Motivated by this result, we present an offline algorithm that compute an RMIS, in polynomial time, if one exists. It can be used in a strategy where all the information about the network is collected at one node (or several, the algorithm being deterministic).

\begin{figure}
\begin{center}
\begin{tikzpicture}[scale=.7]
  \tikzstyle{every node}=[defnode]
  \path (-1,1) node [] (ma) {};
  \path (0,0) node [] (ba) {};
  \path (0,2) node [] (ha) {};
  \path (1,1) node [] (mb) {};
  \path (3,1) node [] (m0) {};
  \path (4,0) node [] (b0) {};
  \path (4,2) node [] (h0) {};
  \path (5,1) node [] (m1) {};
  \path (6,0) node [] (b3) {};
  \path (6,2) node [] (h3) {};
  \path (7,0) node [] (b3b) {};
  \path (7,2) node [] (h3b) {};
  \path (8,1) node [] (m4) {};
  \path (9,0) node [] (b4) {};
  \path (9,2) node [] (h4) {};
  \path (10,1) node [] (m5) {};
  \path (12,1) node [] (m7) {};
  \path (13,0) node [] (b7) {};
  \path (13,2) node [] (h7) {};
  \path (14,1) node [] (m8) {};

  \tikzstyle{every node}=[]
  \path (ma) node[left] {$\beta_k$};
  \path (ba) node[below] {$\gamma_k$};
  \path (ha) node[above] {$\alpha_k$};
  \path (mb) node[] {};
  \path (m0) node[above,xshift=-5pt] {$\beta_1$};
  \path (b0) node[below] {$\gamma_1$};
  \path (h0) node[above,xshift=-2pt] {$\alpha_1$};
  \path (m1) node[right,xshift=2pt] {$\beta_0$};
  \path (b3) node[below] {$\gamma_0$};
  \path (h3) node[above,xshift=2pt] {$\alpha_0$};
  \path (b3b) node[below] {$c_0$};
  \path (h3b) node[above] {$a_0$};
  \path (m4) node[left] {$b_0$};
  \path (b4) node[below] {$c_1$};
  \path (h4) node[above] {$a_1$};
  \path (m5) node[above,xshift=5pt] {$b_1$};
  \path (m7) node[] {};
  \path (b7) node[below] {$c_k$};
  \path (h7) node[above] {$a_k$};
  \path (m8) node[right] {$b_k$};
  \path (mb) node[right=9pt] {$\dots$};
  \path (m5) node[right=9pt] {$\dots$};

  \draw (ma)--(ba)--(mb)--(ha)--(ma);
  \draw (m0)--(b0)--(m1)--(h0)--(m0);
  \draw (m1)--(b3)--(b3b)--(m4)--(h3b)--(h3)--(m1);
  \draw (m4)--(b4)--(m5)--(h4)--(m4);
  \draw (m7)--(b7)--(m8)--(h7)--(m7);

  \tikzstyle{every node}=[mis]
  \path (ha) node {};
  \path (ba) node {};
  \path (h0) node {};
  \path (b0) node {};
  \path (h3) node {};
  \path (b3) node {};
  \path (m4) node {};
  \path (m5) node {};
  \path (m7) node {};
  \path (m8) node {};
\end{tikzpicture} 
\vspace{-0.65cm}
\end{center}
\caption{The graph $G_k$ ($k\in\mathbb{N}$) and one of its two possible robust MISs.}\label{fig:gk}
\end{figure}
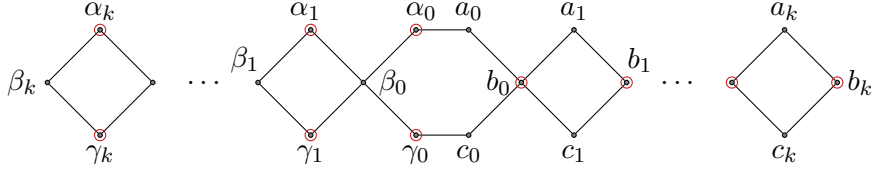

\subsection{\RMIS is non local in general graphs}\label{sub:rmisexists}

Let us first observe that the problem of deciding whether an RMIS exists is not local. Consider two graphs $G_1$ and $G_2$ which respectively consist of a $\mathcal{O}(n)$-long path and to a lollipop graph (\ie a graph joining a $\mathcal{O}(n)$-long path to a clique of size $\mathcal{O}(n)$). Then, clearly, a node at one extremity of $G_1$ and the (unique) pendant node of $G_2$ cannot distinguish their $o(n)$ neighborhood (even with identifiers, which could be exactly the same in this neighborhood) whereas $G_1$ admits an RMIS and $G_2$ does not. We go further and prove that, even if some RMISs do exist, then finding one is non local.
To prove this result, we exhibit an infinite family of graphs $(G_k)_{k\in\mathbb{N}}$, each of which has diameter linear in $k$ (and $n$). 
We first show through Lemmas~\ref{lem:gk-robust} and~\ref{lem:gk-only} that every $G_k$ admits only {\em two} RMISs $M_1$ and $M_2$ which are {\em complements} of each other; that is $M_2 = \overline{M_1} = V \setminus M_1$. 
Intuitively, these MISs are such that two nodes at distance $\mathcal{O}(n)$ must take opposite decisions, although they have the same view up to distance $\mathcal{O}(n)$. (The real proof is more complex and involves showing that identifiers do not help either.) As a result, the nodes may have to collect information up to distance $\Omega(n)$ in order to decide consistently.

Let $(G_k)_{k\in\mathbb{N}}$ be an infinite famility of graphs defined as follows. Graph 
$G_0=(V_0,E_0)$ is such that $V_0=\{a_0,b_0,c_0,\alpha_0,\beta_0,\gamma_0\}$ 
and $E_0$ induces a cycle $a_0$-$b_0$-$c_0$-$\gamma_0$-$\beta_0$-$\alpha_0$-$a_0$ as shown in Figure~\ref{fig:gk}.
Then $G_k=(V_k,E_k)$ is obtained 
from $G_{k-1}=(V_{k-1},E_{k-1})$ as follows: $V_k=V_{k-1}\cup\{a_k,b_k,c_k,\alpha_k,\beta_k,\gamma_k\}$ 
and $E_k=E_{k-1}\cup\{\{\beta_{k-1},\alpha_{k}\},\{\beta_{k-1},\gamma_{k}\},$ 
$\{\alpha_{k},\beta_{k}\},\{\gamma_{k},\beta_{k}\},\{b_{k-1},a_{k}\},
\{b_{k-1},c_{k}\},\{a_{k},b_{k}\},\{c_{k},b_{k}\}\}$.

For any $k$, define $M_1$ as the set of nodes $\{\alpha_i,\gamma_i,b_i|i\le k\}$ and $M_2=\{a_i,c_i,\beta_i|i \le
k\}$. Observe that $M_2 = V_k\setminus M_1$ (written $\overline{M_1}$). Set $M_1$ is illustrated in Figure~\ref{fig:gk}.

\begin{lemma}\label{lem:gk-robust}
For any $k\geq 0$, 
$M_1$ and 
$M_2$ 
are RMISs in $G_k$.
\end{lemma}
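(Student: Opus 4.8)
The plan is to observe that $M_1$ and $M_2$ are not merely complementary independent sets but in fact form a proper $2$-colouring of $G_k$; this reduces the lemma to an easy general fact about connected bipartite graphs. As already noted in the text, independence is preserved under edge deletion, so once I know $M_1$ (resp.\ $M_2$) is independent in $G_k$ it stays independent in every connected spanning subgraph, and it only remains to show it stays \emph{maximal}.

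First I would prove by induction on $k$ that every edge of $G_k$ joins a vertex of $M_1$ to a vertex of $M_2$, i.e.\ that $(M_1,M_2)$ is a bipartition of $G_k$. For the base case one checks the six edges of the hexagon $G_0$: with $b_0,\gamma_0,\alpha_0\in M_1$ and $a_0,c_0,\beta_0\in M_2$, the cycle $a_0$-$b_0$-$c_0$-$\gamma_0$-$\beta_0$-$\alpha_0$-$a_0$ alternates between the two classes. For the inductive step one checks the eight edges added at level $k$: each of $\{\beta_{k-1},\alpha_k\}$, $\{\beta_{k-1},\gamma_k\}$, $\{\alpha_k,\beta_k\}$, $\{\gamma_k,\beta_k\}$ and the four latin analogues has exactly one endpoint among the new $M_1$-vertices $\{\alpha_k,\gamma_k,b_k\}$ and one among the $M_2$-vertices (using $\beta_{k-1}\in M_2$, $b_{k-1}\in M_1$). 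This mechanical verification is the only place requiring care; it shows the colour classes are exactly $M_1$ and $M_2$, so each is independent and $M_2=\overline{M_1}$ as claimed. In the same induction I would record that $G_k$ is connected, since $G_0$ is a cycle and each level attaches its new vertices to the already-connected vertices $\beta_{k-1}$ and $b_{k-1}$; note also $|V_k|\geq 2$.

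Finally I would invoke the following principle. In a connected graph on at least two vertices, every vertex has positive degree in every connected spanning subgraph $H$, for otherwise $H$ would be disconnected. Fix $M\in\{M_1,M_2\}$ and any $v\notin M$; by the bipartition every neighbour of $v$ lies in $M$, so the (necessarily non-empty) set of edges incident to $v$ in $H$ all lead into $M$. Hence $v$ retains a neighbour in $M$ within $H$, so $M$ remains dominating, hence maximal, in $H$. As $M$ is also independent, it is an RMIS; applying this with $M=M_1$ and then $M=M_2$ proves the lemma.

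I do not expect a genuinely hard step here: the conceptual content is the single observation that $(M_1,M_2)$ \emph{is} the bipartition of $G_k$, after which robustness is automatic, because deleting all edges from a non-$M$ vertex to $M$ would isolate that vertex and disconnect the graph. The only thing to watch is the exhaustive edge check establishing the bipartition, which I would organise cleanly through the inductive construction of $G_k$ rather than case-by-case on the drawn figure, so that the base hexagon and the eight edges added per level are each accounted for exactly once.
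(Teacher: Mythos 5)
Your proof is correct, but it takes a genuinely different route from the paper's. You observe that $(M_1,M_2)$ is precisely the bipartition of $G_k$ (your edge check is right: the hexagon $G_0$ alternates between the two classes, and each of the eight edges added at level $k$ joins a new $M_1$-vertex $\alpha_k,\gamma_k,b_k$ to an $M_2$-vertex, using $\beta_{k-1}\in M_2$ and $b_{k-1}\in M_1$), after which robustness follows from the general principle that in a connected bipartite graph on at least two vertices, both colour classes remain dominating---hence maximal---in every connected spanning subgraph, since no vertex can lose all its incident edges without disconnecting the subgraph. The paper argues differently: it verifies independence and maximality of $M_1$ directly, then uses a counting argument---any connected spanning subgraph of $G_k$ removes at most one edge per simple cycle, and every vertex of $V_k\setminus M_1$ has strictly more neighbours in $M_1$ than simple cycles through it, so it keeps a neighbour in $M_1$. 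Your argument buys more: it proves the general fact that both colour classes of any connected bipartite graph are robust MISs, and it sidesteps the paper's cycle-counting step, which is only valid because $G_k$ is a cactus (each edge lies in exactly one simple cycle, so removing two edges from a cycle disconnects the graph)---a point the paper leaves implicit. The paper's counting argument, on the other hand, does not depend on bipartiteness and is stylistically closer to the criticality arguments used in the subsequent uniqueness lemma (Lemma~\ref{lem:gk-only}). Both are complete proofs of the statement.
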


\begin{proof}
  We prove this for $M_1$. The same holds symmetrically for $M_2$.
  First, observe that $M_1$ is a valid MIS: no two of its nodes are neighbors by construction (independence) and all nodes in $V_k\setminus M_1$ have neighbors in $M_1$ (maximality).
Now, to obtain a connected spanning subgraph of $G_k$, one
can remove from $E_k$ at most one edge from each simple cycle of $G_k$. Since any node of
$V_k\setminus M_1$ has a number of neighbors in $M_1$ strictly greater than the number of simple cycles it belongs to, $M_1$ is robust.
\end{proof}

\begin{lemma}\label{lem:gk-only}
For any $k\geq 0$, $M_1$ and $M_2$ are the only two RMISs in $G_k$, implying that nodes $b_k$ and $\beta_k$ must take opposite decisions in all RMISs.
\end{lemma}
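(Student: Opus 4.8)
The plan is to exploit the very simple cycle structure of $G_k$. First I would record that $G_k$ is a bridgeless cactus: every edge lies in exactly one simple cycle, namely either the central hexagon or one of the $2k$ four-cycles (``diamonds'') with poles $\beta_{i-1},\beta_i$ and equator $\alpha_i,\gamma_i$ (Greek arm) or poles $b_{i-1},b_i$ and equator $a_i,c_i$ (Latin arm). As already used for Lemma~\ref{lem:gk-robust}, a connected spanning subgraph is exactly what one obtains by deleting at most one edge from each such cycle. From this I would extract a purely local necessary criterion for robustness: if $M$ is a robust MIS and $v\notin M$, then $v$ must have two neighbours in $M$ that lie in a common cycle. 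Indeed, if every cycle contained at most one $M$-edge of $v$, then — since no edge is a bridge — one could delete all of $v$'s $M$-edges using at most one deletion per cycle, producing a connected spanning subgraph in which $v$ has no neighbour in $M$, contradicting maximality there.

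Next I would push this criterion through the two arms to fix the membership of the pole vertices. Fix a Greek diamond with poles $\beta_{i-1},\beta_i$ and equator $\alpha_i,\gamma_i$. Since the only neighbours of $\alpha_i$ are the two poles (lying together in this one cycle), $\alpha_i\notin M$ forces $\beta_{i-1},\beta_i\in M$; conversely, if either pole is in $M$ then $\alpha_i\notin M$ by independence. A short case check then shows that the two poles of every diamond share the same status: if both poles are out, maximality forces $\alpha_i,\gamma_i\in M$; if both are in, independence forces $\alpha_i,\gamma_i\notin M$; and if exactly one pole is in $M$, then $\alpha_i$ is out but has only one $M$-neighbour, violating the criterion. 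Hence all $\beta_i$ carry a single status $s\in\{0,1\}$ ($s=1$ meaning ``in $M$''), and symmetrically all $b_i$ carry a status $t\in\{0,1\}$; moreover $s=1$ forces every $\alpha_i,\gamma_i\notin M$ while $s=0$ forces every $\alpha_i,\gamma_i\in M$, and likewise for $t$ and the $a_i,c_i$.

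Finally I would couple $s$ and $t$ through the hexagon and eliminate the two ``uniform'' combinations. The four hexagon-only vertices $\alpha_0,\gamma_0,a_0,c_0$ each have both neighbours inside the hexagon, so by the criterion any of them outside $M$ forces both of its hexagon-neighbours into $M$. If $\beta_0\in M$ (i.e. $s=1$), then $\alpha_0,\gamma_0\notin M$ force $a_0,c_0\in M$, hence $b_0\notin M$ and $t=0$; unwinding this reconstructs exactly $M_2$, and symmetrically $t=1$ yields $M_1$. The case $s=t=1$ is then impossible, since $\beta_0\in M$ already forces $a_0,c_0\in M$ and thus $b_0\notin M$; and $s=t=0$ is impossible because $\alpha_0$ (coverable only through $\beta_0$) and $a_0$ (coverable only through $b_0$) would both be forced into $M$, yet $\alpha_0$ and $a_0$ are adjacent in the hexagon, violating independence. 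Only $(s,t)\in\{(1,0),(0,1)\}$ survive, giving precisely $M_2$ and $M_1$; and since $b_k\in M \iff t=1 \iff M=M_1 \iff \beta_k\notin M$, the vertices $b_k$ and $\beta_k$ always take opposite decisions. I expect the two delicate points to be stating the robustness criterion in exactly the right local form (which rests on the bridgeless-cactus observation) and the hexagon analysis — in particular, ruling out the $s=t=0$ case genuinely relies on the specific cross-edges $\alpha_0 a_0$ and $\gamma_0 c_0$ of the central hexagon.
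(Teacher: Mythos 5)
Your proof is correct, and it reaches the lemma by a recognizably different organization of the same underlying mechanism. The paper also exploits the fact that a connected spanning subgraph of $G_k$ omits at most one edge per simple cycle, but it packages this as single \emph{critical edges} (a removable edge whose deletion destroys maximality) and proves six claims anchored at $\alpha_0$: Claims~1--2 settle the hexagon outright, and Claims~3--6 push the assignment outward along each arm by a smallest-index argument that needs the hexagon fixed first. You instead isolate a reusable necessary condition --- every non-member of a robust MIS must have two $M$-edges inside a \emph{single} cycle --- which in the bridgeless cactus $G_k$ is exactly the specialization of the robustness characterization of~\cite{DKP15} that the paper itself invokes only later, in the appendix proof of the global algorithm. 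You then apply it anchor-free: the two poles of every diamond share membership status, giving one status bit per arm, after which the hexagon's cross-edges $\{\alpha_0,a_0\}$ and $\{\gamma_0,c_0\}$ eliminate the uniform combinations $(s,t)\in\{(0,0),(1,1)\}$, and the two survivors reconstruct $M_2$ and $M_1$. Your route buys a cleanly stated forcing lemma (no per-claim criticality bookkeeping, no minimal-$i$ induction --- incidentally sidestepping the paper's slip in Claim~5, where $\{w,\gamma_i\}$ should read $\{w,c_i\}$) and an arm analysis decoupled from the hexagon, making the endgame a transparent $2\times 2$ elimination; the paper's route needs only the weakest form of the tool, one critical edge at a time, and produces the two RMISs directly as the two branches of a dichotomy on $\alpha_0$. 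Two points to make explicit in a write-up: the criterion's validity rests on the (true, but load-bearing) fact that the blocks of $G_k$ are precisely the hexagon and the $2k$ diamonds, so that deleting one edge per cycle preserves connectivity; and your argument shows every RMIS equals $M_1$ or $M_2$, so the stated ``only two'' follows only in combination with Lemma~\ref{lem:gk-robust}, which certifies that both actually are RMISs.
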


\begin{proof}
We say that an edge $e \in E$ is \emph{critical} with respect to some MIS $M$ in $G_k$ 
if $e$ is removable (\ie not a cut edge) and $M$ is no longer maximal in $(V_k,E_k\setminus\{e\})$. The existence of a critical edge implies that the considered MIS is not robust.
Let us now consider an RMIS $M$ in $G_k$. We prove several claims 
on $M$.

\noindent\textbf{Claim 1:} If $\alpha_0\in M$, then $\{\gamma_0,b_0\} \subseteq M$ and $\{\beta_0,c_0,a_0\} \subseteq \overline{M}$.

If $\alpha_0\in M$, then $\{\beta_0,a_0\} \subseteq \overline{M}$ (independence). It also holds that $b_0\in M$, otherwise the edge $\{\alpha_0,a_0\}$ is critical (robustness). It follows that $c_0\notin M$ (independence) and $\gamma_0\in M$ (maximality).

\noindent\textbf{Claim 2:} If $\alpha_0\notin M$, then $\{\gamma_0,b_0\} \subseteq \overline{M}$ and $\{\beta_0,c_0,a_0\} \subseteq M$. \hfill {\it (symmetric to Claim~1)}

\noindent\textbf{Claim 3:} If $\alpha_0\in M$, then $\{\alpha_i,\gamma_i\}\subseteq M$ and $\beta_i\notin M$ for all $i\le k$.

By contradiction, if $\beta_i\in M$ for some $i$, then $\{\alpha_i,\gamma_i\} \subseteq \overline{M}$ (independence). Let $i$ be smallest possible, then edges $\{\alpha_i,\beta_i\}$ and 
$\{\beta_i,\gamma_i\}$ are critical \wrt $M$ (recall that, if 
$i>0$, $\beta_{i-1}\notin M$ by construction and $\beta_0\notin M$ 
by Claim 1), which contradicts robustness. Therefore, $\beta_i\notin M$.
The maximality of $M$ allows us to conclude.

\noindent\textbf{Claim 4:} If $\alpha_0\notin M$, then $\{\alpha_i,\gamma_i\}\subseteq \overline{M}$ and $\beta_i\in M$ for all $i\le k$. \hfill {\it (symmetric to Claim 3)}

\noindent\textbf{Claim 5:} If $\alpha_0\in M$, then $\{a_i,c_i\}\subseteq \overline{M}$ and $b_i\in M$ for all $i\le k$.

By contradiction, if $b_i\notin M$ for some $i$, then $\{a_i,c_i\}\subseteq \overline{M}$ (independence). Let $i$ be smallest possible (recall that, if 
$i>0$, $b_{i-1}\in M$ by construction and $b_0\in M$ 
by Claim 1). Let $w$ be $b_0$ if $i=0$ and be $b_{i-1}$ otherwise.
The edges $\{w,a_i\}$ and $\{w,\gamma_i\}$
are then critical, which contradicts robustness. Therefore, $b_i\in M$.
The independence of $M$ allows us to conclude.

\noindent\textbf{Claim 6:} If $\alpha_0\notin M$, then $\{a_i,c_i\}\subseteq M$ and $b_i\notin M$ for all $i\le k$. \hfill {\it (symmetric to Claim 5)}

Claims 1 to 6 imply that $M=M_1$ if $\alpha_0\in M$ and $M=M_2$ otherwise.
\end{proof}

Finally, we relate these results to the locality of the \RMIS problem.

\begin{theorem}\label{th:localityexists}
\RMIS requires the nodes to use information up to distance $\Omega(n)$ in $G_k$.
\end{theorem}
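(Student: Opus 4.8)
The plan is to turn the ``opposite decisions'' conclusion of Lemma~\ref{lem:gk-only} into an indistinguishability argument on the family $(G_k)$. Since $|V_k| = 6(k+1)$, we have $k = \Theta(n)$, and the two distinguished vertices $b_k$ and $\beta_k$ sit at the tips of the two opposite arms of $G_k$: walking from $b_k$ inward costs two hops per diamond down to $b_0$ (distance $2k$), crossing the central hexagon from $b_0$ to $\beta_0$ costs $3$ (the only Greek--Latin adjacencies are the hexagon edges $\{\alpha_0,a_0\}$ and $\{\gamma_0,c_0\}$), and climbing the other arm costs another $2k$, so $d(b_k,\beta_k) = 4k+3 = \Theta(n)$. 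First I would record the structural symmetry: the index-preserving swap $\sigma$ of Greek and Latin labels ($\alpha_i\leftrightarrow a_i$, $\beta_i\leftrightarrow b_i$, $\gamma_i\leftrightarrow c_i$) is a graph automorphism of $G_k$ (it fixes the hexagon and exchanges the two arms edge for edge), and it maps $b_k\leftrightarrow\beta_k$ and $M_1\leftrightarrow M_2$. By Lemma~\ref{lem:gk-only}, any correct \RMIS algorithm must output different values at $b_k$ and $\beta_k$.

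Next I would set up the contradiction. Suppose some algorithm $\mathcal{A}$ solves \RMIS on every $G_k$ using only information within distance $t$, and assume $t\le 2k-1$. In the $\mathcal{LOCAL}$ model the output of $\mathcal{A}$ at a vertex $v$ is a function of the radius-$t$ ball $B_t(v)$ together with its identifiers. For $t\le 2k-1$ neither ball reaches the hexagon ($b_0$ and $\beta_0$ lie at distance $2k$), so $B_t(b_k)$ and $B_t(\beta_k)$ are vertex-disjoint and, as unlabeled graphs, isomorphic: each is the same truncated ladder of diamonds with a single degree-two tip, and $\sigma$ restricts to an isomorphism between them. It remains only to make the two local views indistinguishable to $\mathcal{A}$, which then yields the contradiction with Lemma~\ref{lem:gk-only}.

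The main obstacle is exactly the one flagged in the overview: showing that unique identifiers cannot break this symmetry. In an anonymous (or port-numbered) model the automorphism $\sigma$ would instantly force equal outputs at $b_k$ and $\beta_k$, contradicting Lemma~\ref{lem:gk-only}; but with distinct identifiers one cannot literally copy identifiers between the two disjoint balls without violating uniqueness. I would remove this obstacle with a Naor--Stockmeyer style order-invariance (Ramsey) argument: because $\mathcal{A}$ is correct for \emph{every} identifier assignment and $t$ is a fixed radius, one extracts an infinite set of identifiers on which $\mathcal{A}$'s decision at a ladder tip depends only on the \emph{relative order} of the identifiers in its $t$-ball; assigning identifiers that increase along both arms then gives $B_t(b_k)$ and $B_t(\beta_k)$ the same order type, so $\mathcal{A}$ must return the same value at $b_k$ and $\beta_k$. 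This contradicts Lemma~\ref{lem:gk-only}, so no correct algorithm can have $t\le 2k-1$; hence $t\ge 2k = \Omega(n)$, which is the claimed bound. Everything other than this identifier-robustness step is purely structural and follows from the construction of $G_k$ together with Lemmas~\ref{lem:gk-robust} and~\ref{lem:gk-only}.
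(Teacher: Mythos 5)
Your proposal is correct and shares the paper's overall skeleton: both proofs exploit the fact that $b_k$ and $\beta_k$ have disjoint, structurally isomorphic views up to distance $\Theta(k)$ (your automorphism $\sigma$ is implicit in the paper), and both derive the contradiction from Lemma~\ref{lem:gk-only}, which forces opposite outputs at the two extremities. Where you genuinely diverge is in the crux step of neutralizing unique identifiers. The paper uses an elementary hybrid/pigeonhole argument: three identifier-disjoint labelings ${\cal L}_1,{\cal L}_2,{\cal L}_3$ of the two tip neighborhoods yield three labeled graphs $G_k^1,G_k^2,G_k^3$, and since a $k$-local algorithm's decision at a tip is a function of the labeled ball alone, two of the three labelings must induce the same decision, so in one of the three graphs $b_k$ and $\beta_k$ decide identically. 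You instead invoke a Naor--Stockmeyer-style Ramsey extraction to pass to order-invariant behavior on an infinite identifier set, then give both balls the same order type. Both routes are sound, but they buy different things: yours is heavier machinery yet makes the identifier-robustness principle explicit and reusable (any two disjoint isomorphic views with matching order types force equal outputs), while the paper's pigeonhole is self-contained, constructive, and---worth noting---remains valid when identifiers are drawn from a domain barely larger than $n$ (it needs only three disjoint labelings of a ball), whereas your Ramsey step silently assumes an unbounded (or Ramsey-large) identifier space, since for balls of size $m=\Theta(n)$ the finite Ramsey numbers vastly exceed any polynomial range; under the common convention that identifiers come from $\{1,\dots,n^{O(1)}\}$, your extraction would not go through as stated, and the paper's three-labeling trick is the cleaner fix. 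Your quantitative accounting ($d(b_k,\beta_k)=4k+3$, $n=6(k+1)$, hence $t\geq 2k=\Omega(n)$) is accurate and slightly sharper than the paper's ``larger than $4k$'' with radius-$k$ balls, though both give the same asymptotic bound.
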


\begin{proof}
The proof would be straightforward in an anonymous network, due to the fact that $b_k$ and $\beta_k$ have indistinguishable structural neighborhoods (\aka views~\cite{YK96}) up to distance $\mathcal{O}(n)$, and yet, they must take different decisions (Lemma~\ref{lem:gk-only}). Unique identifiers make the argument more complicated, since $b_k$ and $\beta_k$ do have unique {\em labeled} views (\ie views taking into account identifiers) even at distance $0$. 

Let us call $b_k$ and $\beta_k$ the {\em extremities} of the network. Observe that the distance between both extremities is larger than $4k$.
Let ${\cal L}_1$, ${\cal L}_2$, and ${\cal L}_3$ be three possible labeling functions that assign unique identifiers to the neighborhood of an extremity up to distance $k$ (say) and such that the three labelings have no identifier in common. Let $G_k^1$ be the labeled graph whose structure is isomorphic to $G_k$, in which the neighborhood of $b_k$ is labeled according to ${\cal L}_1$ and the neighborhood of $\beta_k$ is labeled according to ${\cal L}_2$; the rest of the nodes are labeled arbitrarily. Let $G_k^2$ be defined similarly, but using ${\cal L}_3$ instead of ${\cal L}_2$ in the neighborhood of $\beta_k$. Finally, let $G_k^3$ be defined similarly, but using ${\cal L}_2$ in the neighborhood of $b_k$ and ${\cal L}_3$ in the neighborhood of $\beta_k$. Now, if $b_k$ and $\beta_k$ use only information up to distance $k$, then they must take identical decisions in at least one of the three labeled graphs, contradicting Lemma~\ref{lem:gk-only}.
\end{proof}

\subsection{A global algorithm to compute a robust MIS (if one exists)}

We now describe an algorithm that tests constructively whether an RMIS exists in a graph $G$. 
Our algorithm relies on the construction of an auxiliary tree called {\em \ABCT}, which represents a particular decomposition of
the graph based on biconnected components (it is neither a block-cut tree, nor a bridge tree, but a mix of these two types of decomposition). Roughly speaking, our algorithm works by propagating constraints about the MIS along 
the \ABCT . 
Each non-trivial component is solved on the way up by means of a reduction of its constraints to 2-SAT (which {\em is} polynomial-time solvable). 

In the following, we describe how the \ABCT is built over $G$. It is followed by 
an informal presentation of the algorithm---due to the lack of space, the formal algorithm has been moved to
Appendix~\ref{sec:pseudo} and its proof is presented in Appendix~\ref{sec:bigproof}. 

\noindent\textbf{Decomposition of $G$.} 
In the context of this section, we call {\em biconnected component} (or simply {\em component}) in $G$ a maximal
subgraph $H\subseteq G$ such that the removal of any node in $H$ does not disconnect $H$ (\ie $H$ is $2$-{\it vertex}-connected). 
By abuse of notation, we write $u \in H$ if $u$ is a vertex of the subgraph $H$. 
We consider here a mix of the so-called block-cut tree and bridge tree and refer to it as the {\it \ABCT}. 
Let \BS be the set of biconnected components of $G$ (see Figure~\ref{fig:ex_compo} for an illustration). 
\begin{figure}
\begin{center}
  \includegraphics[width=.55\textwidth] {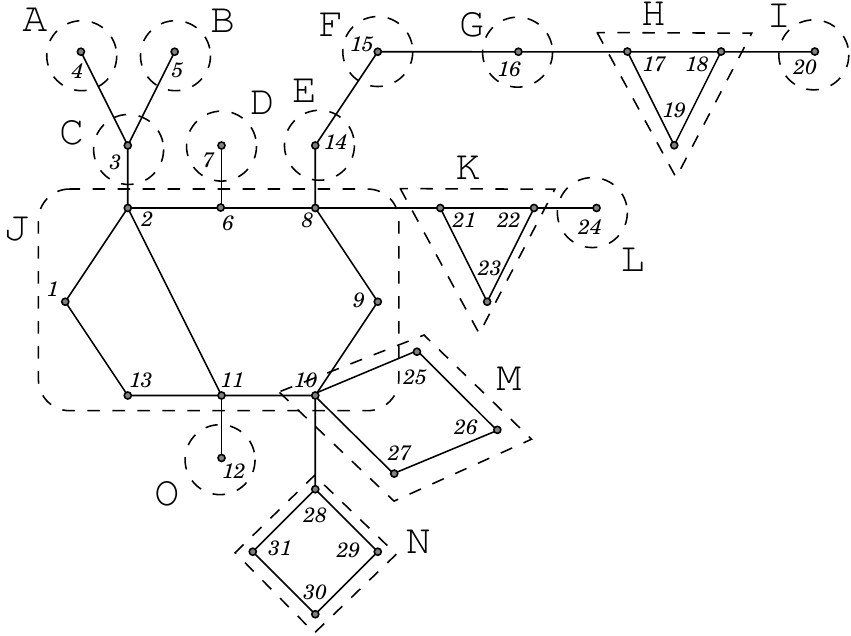}
  \hspace{-35pt}
  \includegraphics[width=.47\textwidth]{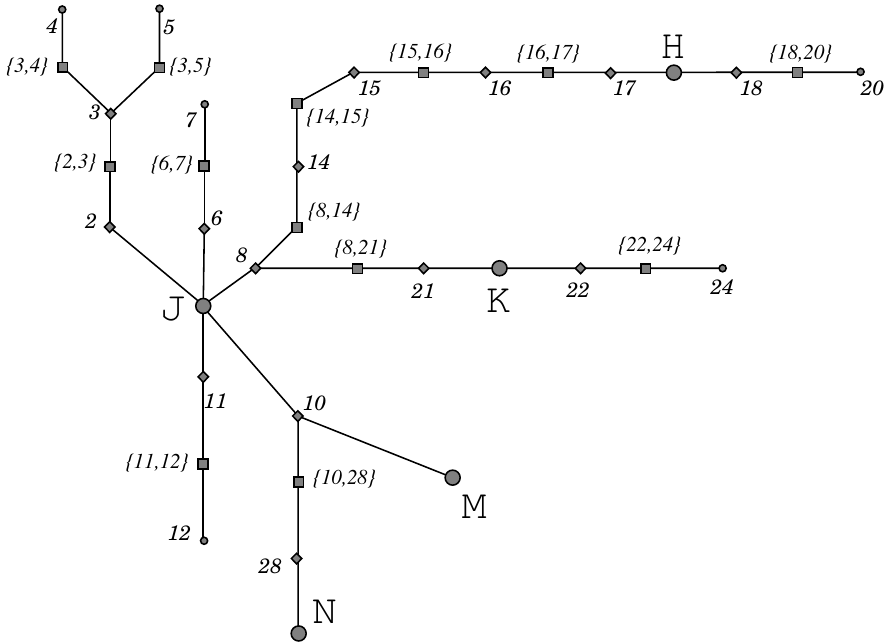}
\end{center}
\vspace{-0.65cm}
\caption{\label{fig:ex_compo}
  Decomposition of a graph into biconnected components (left) and the corresponding \ABCT (right).
	Nodes with shapes~$\circ$ and~{\scriptsize $\Diamond$} correspond respectively to pendant vertices ($\in \PV$) and articulation points ($\in \A$). 
	Nodes with shape~{\tiny $\Box$} correspond to bridge edges ($\in \B$), and  nodes with shape~{\tiny $\bigcirc$} show components that contain at least $3$ vertices ($\in \C$).
}
\end{figure}

Two adjacent components either share a common articulation point or they are linked by a bridge. 
For instance, node $10$ in Figure~\ref{fig:ex_compo} is common to components $J$ and $M$, and edges $\{6,7\}$ and $\{8,21\}$ are bridges between $D$ and $J$, and $J$ and $K$ respectively.  
Let \A be the set of all articulation points (whether or not they are shared) and \B the set of bridges. Let \PV be the set of all pendant vertices, which form singleton components---in
Figure~\ref{fig:ex_compo}, $\PV = \{4,5,7,12,20,24\}$. Finally, let \C be the set of components that contain three or more vertices---$\C = \{H,J,K,M,N\}$.  
Remark that if \C is empty, then $G$ is acyclic.

The ${\cal ABC}$ decomposition (or \ABCT) of $G$, is the graph $\CT=(V', E')$ such that $V' = \A \cup \B \cup \C \cup \PV$ and $E'$ is
defined by the two following rules:
$(i)$ $\forall a \in \A$, $\forall c \in \C$, $\{a,c\} \in E'$ if and only if $a \in c$; and
$(ii)$ $\forall b=\{u,v\} \in \B$, $\{b,u\} \in E'$ and $\{b,v\} \in E'$.
Figure~\ref{fig:ex_compo} (right side) shows the \ABCT corresponding to the graph of the left side.


\noindent\textbf{Algorithm.}
The algorithm works on \CT, the \ABCT made over the four sets \A, \B, \C, and \PV with respect to $G$.  
If the set \C is empty, it means that $G$ is acyclic.  In that case, $G$ trivially admits an RMIS, 
which is returned by the algorithm. 
Otherwise ($\C  \neq \emptyset$), a component vertex $C \in \C$ is arbitrarily selected to be the root of \CT,
denoted by \R.  
Then, the classical concepts in oriented trees, such as {\em children}, {\em parent},
{\em descendant}, {\em subtree}, or {\em leaf} apply to the vertices of \CT.  
For ease of reading, we abuse the term ``{\em admit}'' by saying that a subtree $\hat{T}$ of \CT ``{\em admits an
RMIS}'' if the subgraph of $G$ corresponding to $\hat{T}$ admits an RMIS.

At the higher level, the algorithm proceeds within two phases.  Based on \CT, the first one is called the {\em
labeling} phase. 
Initiated from the leaves of the tree, it evaluates whether the subgraph of $G$ corresponding to the current 
subtree admits a robust MIS or not and labels each subtree according to that. 
If it does admit an RMIS, it may impose some constraints about the membership of 
the higher nodes.  For instance, a robust MIS of the subtree may exist only if the articulation point leading up to the parent belongs to it. 
Then, the goal of the labeling phase consists of propagating (and memorizing within labels) such constraints up through subtle interactions among the
various types of vertices (namely, pendant nodes, articulation point, bridge, or component) leading up to the root.  
Besides, the inner topological configuration of a single component may also impose non-trivial 
constraints for the existence of a robust MIS. Intuitively, it must have properties that relate to (but are slightly more complex than) 
bipartiteness. 
The second (short) phase of the algorithm is called the {\em deciding} phase.
It simply consists of deciding whether the graph admits a robust MIS or not considering 
the label of the root of the \ABCT. 

\noindent\textbf{Labeling Phase.}
As already mentioned, a given subtree $\hat{T}$ of \CT may or may not admit an RMIS. 
Intuitively, the global decision depends on topological contraints, established over $\hat{T}$.  Obviously, those constraints
influence the possible topological organization of a global RMIS toward the parent of $\hat{T}$. So, this mecanism
involves in a crucial way the unique $x \in \hat{T}$ such that $x \in \A \cup \PV$ through which $\hat{T}$
is connected to the remainder of \CT.   In the following $x$ is called 
the {\em attachment point} of $\hat{T}$. In other words and more conveniently, the attachment point of the root $r$ of $\hat{T}$ is
either $r$ itself if $r \in \A \cup \PV$ or the parent of $r$ if $r \in \B \cup \C$.  
For instance, in 
Figure~\ref{fig:ex_compo}, assuming that \R is the component vertex {\tt M}, then, $4$ is the
attachment point of itself ($4 \in \PV$), $8$ is the one of both $\{8,14\}$ and $\{8,21\}$, while $10$ is the
attachment point of $\{10,28\}$ ($\in \B$), $M$ ($\in \C$), and itself ($\in \A)$. 

Constraint transmission takes place from the leaves to the root by tagging 
each vertices with the following labels: \typePI for \typePIdesc (meaning that $\hat{T}$ admits an RMIS that includes $x$, the attachment point of $\hat{T}$); \typePO  for \typePOdesc (meaning that $\hat{T}$ admits an RMIS that does not include $x$); \typePE for \typePEdesc (meaning that $\hat{T}$ is not tagged \typePO, and it admits an RMIS that does not include $x$ assuming that another node $x'$, external to $\hat{T}$ 
belongs to the RMIS); and \typeN for \typeNdesc (meaning that none of the three other tags is applicable to $\hat{T}$). An extra label, \typeE is used at the root (see below). Note that the algorithm associates to each label a set of vertices that is used to store a robust MIS of $\hat{T}$ satisfying the constraint of the label.  Also, remark that \typePO and \typePE are mutually exclusive.  
Furthermore, a vertex can be tagged with more than a single tag, namely either \typePI and \typePO
(together), or \typePI and \typePE (together).  

The analysis consists of recursively tagging each vertex $v \in \CT$ from the leaves to the root.  Let us first
consider $v$ as a leaf.  There are two cases: either $v \in \PV$ or $v \in \C$.  In the former case, 
$v$ is tagged with both \typePI and \typePE.  (Indeed, $v$ being a pendant node, $v$ can or cannot be in the RMIS
depending or its unique neighbor.) For instance, in 
Figure~\ref{fig:ex_compo}, assuming that $\R = M$, the vertices
$4$, $5$, and $20$ are tagged \typePI and \typePE.  
If $v \in \C$, then the algorithm checks whether $v$ must be tagged \typeN, \typePI, \typePO, or \typePE.  For instance, in
the same example, $N$ is tagged \typePI and \typePO.  Indeed, $N$ is a square (sub)graph (refer to Figure~\ref{fig:ex_compo}) and
its attachment point is $28$.  Clearly, it admits two possible RMISs: either $\{31,29\}$ or $\{28,30\}$. In former
case, $28$ does not belong to the RMIS (implying the label of type \typePO); in the latter, $28$ belongs to the RMIS (implying the label of type \typePI). 
The actual finding is solved through reduction to the 2-SAT problem described below.

From now on, consider that $v$ is an internal vertex (\ie $v \notin \PV$).  
Provided that none of its descendants is tagged \typeN, an internal vertex $v \in \CT$ is analyzed according of its
type (whenever a vertex has two tags, each corresponding rule is applied) as follows.

Consider first the case where $v \in \B$. If its (unique) descendant $u$ is tagged \typePI,
 then $v$ is tagged 
\typePO; if $u$ is tagged \typePE, then $v$ is tagged \typePI; if $u$ is tagged \typePO, then 
$v$ is tagged \typePI, and if possible ($v$ is not already tagged \typePO), also \typePE.
In the example (Figure~\ref{fig:ex_compo}), the vertices $\{3,4\}$, $\{3,5\}$, and  $\{18,20\}$ are
all tagged \typePI and \typePO.

When $v \in \A$, let $D_v$ be the set of descendant vertices of $v$. If every vertex $u \in D_v$ is
tagged \typePI, then $v$ is also tagged \typePI; if every vertex $u \in D_v$ is
tagged \typePE, then $v$ is also tagged \typePE; if every vertex $u \in D_v$ is
tagged either \typePO or \typePE and there exists 
$u' \in D_v$ tagged \typePO, then $v$ is tagged \typePO. 
For instance, vertices $3$, $11$, and $18$ are all tagged \typePI and \typePO.

If $v \in \C\setminus\{\R\}$, then, as for the leaves, $v$ is analysed using the 2-SAT reduction described below. 
However, by contrast with a leaf vertex, the analysis introduces extra
clauses to the 2-SAT expression, according to the labels of its descendants.  

If $v = \R$ ($\in \C$), then the algorithm operates the last 2-SAT reduction that checks 
the existence of an RMIS, again accoring to the labels of its descendants. \R is then tagged
either with \typeN ($G$ admits no RMIS) or with \typeE ($G$ admits at least one RMIS).
%

\noindent\textbf{Testing a component vertex.}
The finding process is mainly based on the resolution of a 2-SAT expression.
Let us first consider an internal $C$ component---leaf and root components are special cases that are addressed later.
The procedure first consider $C'$ that is equal to $C$ to which edges linking vertices both tagged \typePO are removed.  Doing so,
$C'$ may be split into several components. If $C'$ is not bipartite, then the component is tagged with \typeN.  Otherwise, for each maximal connected component $C_i$, one part of the bipartition is arbitrarily
chosen in which each vertex $v$ ($\in C'$) is labelled with a label $\ell(v)$ equal to $x_i$.  
The vertices of the other part are labelled $\neg x_i$.  
All those labels form a 2-SAT expression to which the tags coming up from the subtrees are included.  
For instance, a node $v \in \A$ that is tagged
\typePI forces the label $\ell(v)$ of the corresponding vertex $v \in C_i$ to {\tt true}.  Also, the edge $\{u,v\}$ that was 
removed from $C$ to $C'$ also forces the labels corresponding to $u$ and $v$ to be mutually exclusive
($\neg\ell(u)\vee\neg\ell(v)$), meaning that at most one of the two can be included into the RMIS, but not both.   

Since a vertex of \CT can be tagged with one or two tags, the satisfiability of the 2-SAT expression is evaluated
assuming first that the attachment point $x$ of $C$ belongs to the RMIS (Tag~\typePI, $\ell(x) = \mathtt{true}$).  
Next, it is evaluated assuming that the attachment point $x$ of $C$ does not belong to the RMIS 
(Tag~\typePO, $\ell(x) = \mathtt{false}$).
If $C$ could not be tagged \typePO (\ie the expression could not be satisfied assuming $\ell(x) = \mathtt{false}$), 
then it can still be tagged \typePE.  This is done 
by temporarily adding an aerial $\mathsf{g}$ at the attachment point $x$ of $C$ (\ie a virtual extra vertex $y$ with 
the corresponding edge $\{x,y\}$) and repeating the whole above process with $C \cup \mathsf{g}$.

Note this process is also performed at the leaves and at the root ($C = \R$). However, in both cases, the process is 
simpler.  Indeed, since the leaves have no descendant, it is sufficient to check whether $C$ is bipartite or not.  For
the root (that has no attachment point), it is sufficient to check whether the 2-SAT expression is satisfied or not.

\noindent\textbf{Deciding phase.} 
This phase simply consists in testing the label of \R (attributed
in the labeling phase). If \R is labeled with \typeN, the algorithm rejects. Otherwise (\R
is labeled with \typeE), the algorithm returns the set associated to the label of \R that 
is a robust MIS of $G$ thanks to the work done in the labeling phase.

\section{Conclusion}
\label{sec:conclusion}

This paper is dedicated to showing the actual impact of robustness in highly dynamic distributed systems. 
A property is robust if and only if it is satisfied in every connected spanning subgraphs of a given graph. 
Focusing on the minimal independent set (MIS) problem, we proved the existence of a significant complexity gap between graphs where \emph{all} MIS are robust (building a robust MIS is then a \emph{local} problem) and graphs where \emph{some} MIS are robust (building a robust MIS is then a \emph{global} problem).

We are convinced that robustness is a key property of highly dynamic systems 
to achieving stable structures in such unstable environments.  
The complete characterization of the class \existsMIS is left open, as well as the study of similar symmetry breaking tasks.  

\newpage

\pagenumbering{roman}
\setcounter{page}{1}


\newpage

\appendix

\section{Missing proofs}
\label{sec:missing}


\paragraph{\bf{Proof of Lemma~\ref{lem:BK}}}
({\it All MISs are robust in complete bipartite graphs.})

\begin{missingproof}
  There are two ways of chosing an MIS in a complete bipartite graph $G=(V_1\cup V_2, E)$, namely $V_1$ or $V_2$. Without loss of generality, assume $V_1$ is chosen. Then, in any connected spanning subgraph of $G$, every node in $V_2$ has at least one neighbor in $V_1$ (the graph is bipartiteness). So the MIS remains maximal. (Independence is not affected, as discussed in Section~\ref{sec:definitions}.)
\end{missingproof}

\paragraph{\bf{Proof of Lemma~\ref{lem:sputniks}}}
({\it All MISs are robust in sputnik graphs.})

\begin{missingproof}
  By definition, any removable edge in a sputnik graph belongs to a cycle, thus both of its endpoints have a pendant neighbor. On the other hand, it holds that a pendant node either is in the MIS, or its neighbor must be (thanks to maximality). As a result, after an edge is removed, both of its endpoints remain covered by the MIS, \ie either they are in the MIS or their pendant neighbor is, which preserves maximality.
\end{missingproof}

\paragraph{\bf{Proof of Theorem~\ref{th:localityforall}}}
({\it \RMIS is $\Omega(\log^* n)\cap\mathcal{O}(\log n / \log \log n)$-local in class \forallMIS.})

\begin{missingproof}\noindent (Lower bound):
If one can solve \RMIS in \forallMIS looking at distance $o(\log^* n)$, then as a special case, one solves regular MIS in paths (Indeed, RMISs are MISs, and paths are trees, which are sputniks, which all belong to \forallMIS). 
Then, one can convert the resulting MIS into a $3$-coloring as follows: each node in the MIS takes color $1$, then at most two nodes lie between these nodes. The one with smallest identifier takes color $2$ and the other takes color $3$. Finally, it is well known that Linial's $\Omega(\log^* n)$ lower bound for $3$-coloring~\cite{L92} in cycles extends to paths (within an additive constant), which gives the desired contradiction.
\end{missingproof}

\begin{algorithm}[h]
\caption{Computing an RMIS in \forallMIS for node $v$.}\label{algo:rmis}
\begin{tabbing}
XX: \= XX \= \kill
01: \> Collect the information available within distance $3$ from $v$ in $B_v$\\
02: \> \textbf{If} $B_v$ is a complete bipartite graph with no outgoing edges towards 
nodes not in $B_v$ \textbf{Then}\\
03: \> \> Set $\ell$ as the node with the lowest identifier of $B_v$\\
04: \> \> Terminate outputting $IN$ if $v$ is in the same part as $\ell$, $OUT$ otherwise\\
05: \> \textbf{If} $v$ is a pendant node \textbf{Then}\\
06: \> \> Terminate outputting $IN$\\
07: \> \textbf{If} $v$ has a pendant neighbor \textbf{Then}\\
08: \> \> Terminate outputting $OUT$\\
09: \> Set $E_v$ as the set of outgoing edges of $v$ towards nodes that have a pendant neighbor \\
10: \> Execute the MIS algorithm from~\cite{BE10} ignoring edges of $E_v$
\end{tabbing}
\end{algorithm}

\begin{missingproof}\noindent (Upper bound):
Let $G=(V,E)$ be a graph in \forallMIS. We will prove that Algorithm \ref{algo:rmis} computes a regular MIS (which by definition is robust) in $G$.
All nodes first gather information within distance three (Line 1) and decides if the graph is complete bipartite (Line 2). Three hops are sufficient because all the nodes in a complete bipartite graph are at most at distance 2. Both parts of the test (bipartiteness and completeness) are trivial. If the test is positive, then all nodes which are in the same part as the smallest identifier output IN, the others OUT (lines 3 and 4). Since $G$ is bipartite, the set of nodes outputting $IN$ is independent, and since $G$ is complete, it is maximal.
Now, if the graph is not complete bipartite, then it is a sputnik (Theorem \ref{th:forallMIS}). Let us partition the set of nodes of $G$ into three parts (refer to Figure \ref{fig:LNF} for an illustration): 
the set $P$ of pendant nodes in $G$ (i.e. nodes of degree one), the set $N$ of nodes with at least one neighbor in $P$, and the 
set $F=V\setminus (P\cup N)$ of the nodes which are neither in $P$ or in $N$. Observe that it is easy for a node to determine which of the sets it belongs to, based on the ($3$-hop) information it already has. Furthermore, two neighbors cannot be in $P$, since this would imply a single-edge graph that would then be classified as complete bipartite in the previous step. Then, nodes in $P$ terminate outputting $IN$ (Line 6) and nodes in $N$ terminate outputting $OUT$ (Line 8). The crucial step is that the subgraph of $G$ induced by the nodes of $F$ (denoted 
$G_F$ in the following) is a forest due to the definition of a sputnik. Indeed, by definition, any node involved in a cycle has at least a pendant neighbor and thus belong to $N$. Furthermore, these nodes impose no constraint onto the remaining nodes in $F$ since they are {\em not} in the MIS, and yet, they do not need additional neighbors to be. As a result, the edges between $F$ and $N$ can be ignored (Line 9) and a generic MIS algorithm be executed on the induced forest $G_F$. One such algorithm~\cite{BE10}, dedicated to graphs of bounded arboricity (which the case of $G_F$) requires looking only within $\mathcal{O}(\log n/\log \log n)$ hops. Note that variable $n$ in this formula corresponds to the number of nodes in $G_F$, which is dominated by $|V|$.

Finally, we will prove that the produced MIS is
valid in $G$. Let us call $M$ this MIS, and $M_F \subseteq M$ the MIS produced by algorithm~\cite{BE10} on $G_F$. Then clearly $M=P\cup M_F$ (all nodes in $N$ output
$OUT$). $M$ is independent since $(i)$ $M_F$ is independent in
$G_F$ (and thus in $G$); $(ii)$ no node in
$F$ is neighbor to a node in $P$ (by construction); and $(iii)$ no
two nodes of $P$ are neighbors (as already discussed). As to the maximality, if there exists an independent set $M'=M\cup\{u\}$ for some $u$ in $V\setminus M$, then $u$ must belong to either $P, N,$ or $F$. Being in $P$ is not possible since all nodes in $P$ are already in $M$. Being in $N$ contradicts independence of $M'$ since
any node in $N$ is neighbor to at least one node in $P$ (that belongs to $M$).
Finally, being in $F$ contradicts the fact that $M_F$ is maximal, which concludes the proof.
\end{missingproof}

\newpage

\section{Pseudo-code of Algorithm of Section~\ref{sec:existsMIS}}
\label{sec:pseudo}

\begin{algorithm}[h]
\caption{\textbf{FindRMIS}}\label{algo:findrmis}
\textbf{Input:} A graph $G=(V,E)$\\
\textbf{Output:} A robust MIS of $G$ if $G$ admits one, $\emptyset$ otherwise
\begin{tabbing}
XX: \= XX \= XX \= XX \= XX \= XX \= XX \= XX \=\kill
01: \> Build $\CT=(\A\cup\B\cup\C\cup\PV,E')$ be the \ABCT of $G$\\
02: \> \textbf{If} $\C=\emptyset$ \textbf{then}\\
03: \> \> Build a 2-coloring of $G$\\
04: \> \> Return one non empty maximal set of nodes of $V$ sharing the same color\\
05: \> Let $r\in\C$ (arbitrarily choosen)\\
06: \> Root \CT towards $r$\\
07: \> Let $C(x)$ be the set of children in \CT of each $x\in\A\cup\B\cup\C\cup\PV$\\
08: \> Let $P(x)$ be the parent in \CT of each $x\in\A\cup\B\cup\C\cup\PV\setminus\{r\}$\\
09: \> Associate an empty set of labels $L(x)$ to each $x\in\A\cup\B\cup\C\cup\PV$\\
\> \> \> (a label is a couple $(t,S)$ with type $t\in \{\typePI,\typePO,\typePE,\typeN,\typeE\}$ and $S\subseteq V$)\\
10: \> \textbf{Foreach} $c\in C(r)$ \textbf{do}\\
11: \> \> \textbf{LabelSubTree}$(\CT,c)$\\
12: \> \textbf{If} $L(c)=\{(\typeN,\emptyset)\}$ for a $c\in C(r)$ \textbf{then}\\
13: \> \> $L(r):=\{(\typeN,\emptyset)\}$\\
14: \> \textbf{Else}\\
15: \> \> $R:=$\textbf{TestRMIS}$(\CT,r,\emptyset,\emptyset)$\\
16: \> \> \textbf{If} $R=\bot$ \textbf{then}\\
17: \> \> \> $L(r):=\{(\typeN,\emptyset)\}$\\
18: \> \> \textbf{Else}\\
19: \> \> \> $L(r):=\{(\typeE, R)\}$\\
20: \> \textbf{If} $L(r)=\{(\typeN,\emptyset)\}$ \textbf{then}\\
21: \> \> Return $\emptyset$\\
22: \> \textbf{Else}\\
23: \> \> Return the set of the label of type \typeE of $L(r)$
\end{tabbing}
\end{algorithm}

\begin{algorithm}[h]
\caption{Function \textbf{LabelSubTree}$(\CT,x)$}\label{algo:labelsubtree}
\textbf{Parameters:} An \ABCT \CT and a node $x$ of \CT\\
\textbf{Return:} None
\begin{tabbing}
XX: \= XX \= XX \= XX \= XX \= XX \= XX \= XX \=\kill
01: \> \textbf{Foreach} $c\in C(x)$ \textbf{do}\\
02: \> \> \textbf{LabelSubTree}$(\CT,c)$\\
03: \> \textbf{If} $L(c)=\{(\typeN,\emptyset)\}$ for a $c\in C(x)$ \textbf{then}\\
04: \> \> $L(x):=\{(\typeN,\emptyset)\}$\\
05: \> \textbf{Else}\\
06: \> \> \textbf{If} $x\in \A$ \textbf{then}\\
07: \> \> \> \textbf{LabelNodeA}$(\CT,x)$\\
08: \> \> \textbf{If} $x\in \B$ \textbf{then}\\
09: \> \> \> \textbf{LabelNodeB}$(\CT,x)$\\
10: \> \> \textbf{If} $x\in \C$ \textbf{then}\\
11: \> \> \> \textbf{LabelNodeC}$(\CT,x)$\\
12: \> \> \textbf{If} $x\in \PV$ \textbf{then}\\
13: \> \> \> $L(x):=\{(\typePI,\{x\}),(\typePE,\emptyset)\}$
\end{tabbing}
\end{algorithm}

\begin{algorithm}[h]
\caption{Function \textbf{LabelNodeA}$(\CT,x)$}\label{algo:labelnodea}
\textbf{Parameters:} An \ABCT \CT and a node $x\in\A$ of \CT\\
\textbf{Return:} None
\begin{tabbing}
XX: \= XX \= XX \= XX \= XX \= XX \= XX \= XX \=\kill
01: \> \textbf{If} $L(c)$ contains a label $(\typePI,R_c)$ for each $c\in C(x)$ \textbf{then}\\
02: \> \> $L(x):=L(x)\cup\{(\typePI,\bigcup_{c\in C(x)}R_c)\}$\\
03: \> \textbf{If} $L(c)$ contains a label $(\typePE,R_c)$ for each $c\in C(x)$ \textbf{then}\\
04: \> \> $L(x):=L(x)\cup\{(\typePE,\bigcup_{c\in C(x)}R_c)\}$\\
05: \> \textbf{If} $L(c)$ contains a label $(\typePO,R_c)$ or $(\typePE,R_c)$ for each $c\in C(x)$\\
\> \> \> \textbf{and} $L(c)$ contains a label $(\typePO,R_c)$ for a $c\in C(x)$ \textbf{then}\\
06: \> \> $L(x):=L(x)\cup\{(\typePO,\bigcup_{c\in C(x)}R_c)\}$
\end{tabbing}
\end{algorithm}

\begin{algorithm}[h]
\caption{Function \textbf{LabelNodeB}$(\CT,x)$}\label{algo:labelnodeb}
\textbf{Parameters:} An \ABCT \CT and a node $x=\{P(x),c\}\in\B$ of \CT\\
\textbf{Return:} None
\begin{tabbing}
XX: \= XX \= XX \= XX \= XX \= XX \= XX \= XX \=\kill
01: \> \textbf{If} $L(c)$ contains a label $(\typePI,R_c)$ \textbf{then}\\ 
02: \> \> $L(x):=L(x)\cup\{(\typePO,R_c)\}$\\
03: \> \textbf{If} $L(c)$ contains a label $(\typePO,R_c)$ \textbf{then}\\
04: \> \> $L(x):=L(x)\cup\{(\typePI,\{P(x)\}\cup R_c),(\typePE,R_c)\}$\\
05: \> \textbf{If} $L(c)$ contains a label $(\typePE,R_c)$ \textbf{then}\\
06: \> \> $L(x):=L(x)\cup\{(\typePI,\{P(x)\}\cup R_c)\}$
\end{tabbing}
\end{algorithm}

\begin{algorithm}[h]
\caption{Function \textbf{LabelNodeC}$(\CT,x)$}\label{algo:labelnodec}
\textbf{Parameters:} An \ABCT \CT and a node $x\in \C$ of \CT\\
\textbf{Return:} None
\begin{tabbing}
XX: \= XX \= XX \= XX \= XX \= XX \= XX \= XX \=\kill
01: \> $R:=$\textbf{TestRMIS}$(\CT,x,\{P(x)\},\emptyset)$\\
02: \> \textbf{If} $R\neq\bot$ \textbf{then}\\
03: \> \> $L(x):=L(x)\cup\{(\typePI,R)\}$\\
04: \> $R:=$\textbf{TestRMIS}$(\CT,x,\emptyset,\{P(x)\})$\\
05: \> \textbf{If} $R\neq\bot$ \textbf{then}\\
06: \> \> $L(x):=L(x)\cup\{(\typePO, R)\}$\\
07: \> \textbf{Else}\\
08: \> \> $L(P(x)):=\{(\typePO,\emptyset)\}$\\
09: \> \> $R:=$\textbf{TestRMIS}$(\CT,x,\emptyset,\emptyset)$\\
10: \> \> \textbf{If} $R\neq\bot$ \textbf{then}\\
11: \> \> \> $L(x):=L(x)\cup\{(\typePE,R)\}$\\
12: \> \> $L(P(x)):=\emptyset$\\
13: \> \textbf{If} $L(x)=\emptyset$ \textbf{then}\\
14: \> \> $L(x):=\{(\typeN,\emptyset)\}$
\end{tabbing}
\end{algorithm}

\begin{algorithm}
\caption{Function \textbf{TestRMIS}$(\CT,x,IN,OUT)$}\label{algo:testrmis}
\textbf{Parameters:} An \ABCT \CT, a node $x\in \C$ of \CT and two subsets $IN$ and $OUT$ of \C\\
\textbf{Return:} $\bot$ or a subset of the set of nodes of the subgraph of $G$ induced by $x$
\begin{tabbing}
XX: \= XX \= XX \= XX \= XX \= XX \= XX \= XX \=\kill
01: \> Let $C$ be the subgraph of $G$ induced by $x$\\
02: \> Let $R$ be the set of edges $\{u,v\}$ of $C$ such that $L(u)$ and $L(v)$ contain both\\
\> \> \> a label of type \typePO\\
03: \> \textbf{If} $C\setminus R$ is not bipartite \textbf{then}\\
04: \> \> Return $\bot$\\
05: \> Let $C_1,\ldots,C_k$ be the maximal connected components of $C\setminus R$\\
06: \> Let $F$ be an empty $2$-SAT expression on the set of boolean variables $\{x_1,\ldots,x_k\}$\\
07: \> \textbf{Foreach} $i\in\{1,\ldots,k\}$ \textbf{do}\\
08: \> \> Label each node $v\in C_i$ with $\ell(v)=x_i$ or $\ell(v)=\neg x_i$ such that two neighbors in $C_i$\\
\> \> \> \> do not have the same label\\
09: \> \textbf{Foreach} $a\in\A$ such that $a\in C$ \textbf{do}\\
10: \> \> \textbf{If} $L(a)$ is reduced to one label of type \typePI \textbf{then}\\
11: \> \> \> $F:=F\wedge(\ell(a))$\\
12: \> \> \textbf{If} $L(a)$ is reduced to one label of type \typePO or \typePE \textbf{then}\\
13: \> \> \> $F:=F\wedge(\neg\ell(a))$\\
14: \> \textbf{Foreach} $\{u,v\}\in R$ \textbf{do}\\
15: \> \> $F:=F\wedge(\neg\ell(u)\vee\neg\ell(v))$\\
16: \> \textbf{Foreach} $v\in IN$ \textbf{do}\\
17: \> \> $F:=F\wedge(\ell(v))$\\
18: \> \textbf{Foreach} $v\in OUT$ \textbf{do}\\
19: \> \> $F:=F\wedge(\neg\ell(v))$\\
20: \> \textbf{If} $F$ does not have an satisfying assignment \textbf{then}\\
21: \> \> Return $\bot$\\
22: \> \textbf{Else}\\
23: \> \> Let $SA$ be an satisfying assignment of $F$\\
24: \> \> $R:=\{v\in C|\ell(v)=true \text{ in } SA \}$\\
25: \> \> Return $\left(\bigcup_{c\in C(x)}Set(R,c)\right)\cup R$ where\\
\> \> \> \> $Set(R,c)$ is the set of the label of type \typePI of $L(c)$ if $c\in R$,\\
\> \> \> \> the set of the label of type \typePO or \typePE of $L(c)$ if $c\notin R$
\end{tabbing}
\end{algorithm}

\newpage
~
\newpage
~
\newpage

\section{Proof of Algorithm of Section \ref{sec:existsMIS}}\label{sec:bigproof}

We consider that we apply \textbf{FindRMIS} to a graph $G=(V,E)$ whose \ABCT is $\CT=(\A\cup\B\cup\C\cup\PV,E')$. The objective of this section is to prove that \textbf{FindRMIS} terminates in a polynomial time and returns a robust MIS of $G$ if this latter admits one, $\emptyset$ otherwise. Our proof contains mainly three steps. First (Section \ref{sub:prooftree}), we prove that \textbf{FindRMIS} returns a robust MIS of $G$ whenever $G$ is a tree. Once this trivial case eliminated, we define a set of notations and definitions in Section \ref{sub:proofnotations}. These notations are used in the sequel of the proof. Then, we prove central properties provided by Function \textbf{TestRMIS} (Section \ref{sub:testrmis}) and by Functions \textbf{LabelNode} (Section \ref{sub:labelnode}). We use these properties to prove that the execution of \textbf{FindRMIS} up to Line 19 produces a well-labeling of \CT (in a sense defined below) in Section \ref{sub:prooflabelling}. Finally, we prove that \textbf{FindRMIS} uses this well-labelling of \CT to return a robust MIS if $G$ admits one, $\emptyset$ otherwise (Section \ref{sub:proofbuilding}).

\subsection{Case of the tree}\label{sub:prooftree}

\begin{lemma}\label{lem:findrmistree}
If $G$ is a tree, \textbf{FindRMIS} terminates in polynomial time and returns a robust MIS of $G$.
\end{lemma}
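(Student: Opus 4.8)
The plan is to show that, when $G$ is a tree, \textbf{FindRMIS} never leaves the special branch at Lines 02--04, and that this branch already produces a robust MIS in polynomial time. First I would argue that $\C=\emptyset$: a biconnected component on three or more vertices is $2$-vertex-connected and therefore contains a cycle, whereas a tree is acyclic, so no such component exists. Consequently the test at Line 02 succeeds. Since a tree is bipartite, a proper $2$-colouring exists and is computed at Line 03, after which the algorithm returns one non-empty colour class at Line 04 and terminates.

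It then remains to verify that this colour class, call it $A$, is an MIS and that it is robust. Independence is immediate, since adjacent vertices receive distinct colours and hence no two vertices of $A$ are neighbours. For maximality I would invoke connectivity of the tree: any $v\notin A$ lies in the opposite colour class, and because $G$ is connected with $n\ge 2$ vertices, $v$ has at least one neighbour, which must lie in $A$; thus $v$ cannot be added to $A$ without breaking independence, so $A$ is a maximal independent set. (The degenerate case $n=1$ is handled trivially: the unique vertex forms the returned set and is an MIS.)

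For robustness I would appeal to Definition~\ref{def:robustness} together with the fact that in a tree every edge is a bridge. Hence no edge of $G$ is removable, the only connected spanning subgraph of $G$ is $G$ itself, and therefore \emph{any} MIS of $G$, in particular $A$, is vacuously robust. Finally, polynomiality follows by inspecting the executed lines: the \ABCT is built in polynomial (indeed linear) time at Line 01 via a standard biconnected-component decomposition, testing $\C=\emptyset$ is immediate, and producing a $2$-colouring of a tree together with the extraction of one colour class is done in linear time by a single traversal. I expect no genuine obstacle here; the only points requiring a little care are the maximality argument, which relies on connectivity of $G$, and the observation that robustness is vacuous precisely because a tree contains no removable edges.
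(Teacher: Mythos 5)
Your proof is correct and takes essentially the same route as the paper's: both observe that a tree has $\C=\emptyset$ (no biconnected component on three or more vertices), so \textbf{FindRMIS} executes Lines 02--04 and returns a colour class of a $2$-colouring, which is an MIS computed in polynomial time. The only cosmetic difference is in justifying robustness: the paper cites Theorem~\ref{th:forallMIS} (a tree is vacuously a sputnik, hence lies in \forallMIS), whereas you argue directly that every edge of a tree is a bridge, so $G$ is its own unique connected spanning subgraph and any MIS is vacuously robust --- two equally immediate observations.
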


\begin{proof}
Assume that $G$ is a tree. Line 01 of \textbf{FindRMIS} build the \ABCT of $G$ (that take a polynomial time in the size of $G$). Then, the test on Line 02 is true (since any biconnected component of a tree has a size of 1) and \textbf{FindRMIS} returns a non empty set of nodes that share the same color in a 2-coloring of $G$ (computed in a polynomial time) on Line 04. Note that this set is trivially a MIS of the tree $G$ and hence a robust MIS of $G$ since any tree belongs to \forallMIS (see Theorem \ref{th:forallMIS}).
\end{proof}

\subsection{Notations and Definitions}\label{sub:proofnotations}

In the following of the proof, according to Lemma \ref{lem:findrmistree}, we restrict our attention to the case where the graph analysed by \textbf{FindRMIS} is not a tree. We define in the following a set of notations used in the proof.

The \ABCT of $G$ is now rooted towards a node $r\in\C$. First, we denote by $\CT(x)=(\A_x\cup\B_x\cup\C_x\cup\PV_x,E'_x)$ the \ABCST of \CT rooted to a node $x\in\A\cup\B\cup\C\cup\PV$. For any node $x\in\A\cup\B\cup\C\cup\PV$, we denote its set of children in \CT by $C(x)$. For any node $x\in\A\cup\B\cup\C\cup\PV\setminus\{r\}$, we denote its parent in \CT by $P(x)$ and its attachment point ($x$ if $x\in\A\cup\PV$, $P(x)$ otherwise) by $AP(x)$.

For any node $x\in\A\cup\B\cup\C\cup\PV$, we say that $x$ induces the subgraph $G(x)=(V_x,E_x)$ of $G$ defined as follows. If $x\in\A\cup\PV$, then $V_x=\{x\}$ and $E_x=\emptyset$. If $x=\{u,v\}\in\B$, then $V_x=\{u,v\}$ and $E_x=\{\{u,v\}\}$. If $x\in\C$, then $V_x=x$ and $E_x=\{\{u,v\}\in E|u\in V_x,v\in V_x\}$. Then, we define the subgraph of $G$ induced by $\CT(x)$ as $G(\CT(x))=(\bigcup_{x\in \A_x\cup\B_x\cup\C_x\cup\PV_x}(V_x), \bigcup_{x\in \A_x\cup\B_x\cup\C_x\cup\PV_x}(E_x))$ and the aerial subgraph of $G$ induced by $\CT(x)$ as $G_a(\CT(x))=(\bigcup_{x\in \A_x\cup\B_x\cup\C_x\cup\PV_x}(V_x)\cup\{a\},$ $\bigcup_{x\in \A_x\cup\B_x\cup\C_x\cup\PV_x}(E_x)\cup\{\{a,AP(x)\}\})$ with $a\notin \A_x\cup\B_x\cup\C_x\cup\PV_x$. 


The algorithm \textbf{FindRMIS} associates a set of labels $L(x)$ to each node $x\in\A\cup\B\cup\C\cup\PV$. A label is a couple $(t,S)$ with type $t\in \{\typePI,\typePO,\typePE,\typeN,\typeE\}$ and $S\subseteq V$.

We are now in measure to introduce the main definitions on which relies our proof.

\begin{definition}[Well-labeled \ABCST]\label{def:welllabeledabcst}
Given a node $v\in\A\cup\B\cup\C\cup\PV\setminus\{r\}$, the \ABCST $\CT(v)$ is well-labeled if the following properties hold for any node $x\in \A_x\cup\B_x\cup\C_x\cup\PV_x$:
\begin{enumerate}
\item $L(x)=\{(\typeN,\emptyset)\}$ if and only if $G(\CT(x))$ does not admit a robust MIS and $G_a(\CT(x))$ does not admit a robust MIS including $a$
\item $L(x)$ contains $(\typePI,M)$ with $M$ a robust MIS of $G(\CT(x))$ including $AP(x)$ if and only if $G(\CT(x))$ admits such a MIS.
\item $L(x)$ contains $(\typePO,M)$ with $M$ a robust MIS of $G(\CT(x))$ not including $AP(x)$ if and only if $G(\CT(x))$ admits such a MIS.
\item $L(x)$ contains $(\typePE,M\setminus\{a\})$ with $M$ a robust MIS of $G_a(\CT(x))$ including $a$ if and only if $G_a(\CT(x))$ admits such a MIS and $G(\CT(x))$ does not admit a robust MIS not including $AP(x)$.
\end{enumerate}
\end{definition}

\begin{definition}[Well-labeled \ABCT]\label{def:welllabeledabct}
The \ABCT $\CT$ is well-labeled if the following properties hold:
\begin{enumerate}
\item $L(r)=\{(\typeE,M)\}$ with $M$ a robust MIS of $G$ if and only if $G$ admits such a MIS.
\item $L(r)=\{(\typeN,\emptyset)\}$ if and only if $G$ does not admit a robust MIS.
\end{enumerate}
\end{definition}

\subsection{Function TestRMIS}\label{sub:testrmis}

In this section, we prove the main technical part of the algorithm. Roughly speaking, we prove that the function \textbf{TestRMIS} applied to any node of $\C$ is able to determine if the \ABCST rooted to this node admits a robust MIS or not and to compute one such MIS. We need four lemmas depending on the parameters of the function.

\begin{lemma}\label{lem:testrmis1}
For any node $x\in\C\setminus\{r\}$ such that, for each $v$ of $C(x)$, $\CT(v)$ is well-labeled, $L(v)$ does not contain a label of type \typeN, and $L(P(x))=\emptyset$, \textbf{TestRMIS}$(\CT,x,\{P(x)\},\emptyset)$ returns (in polynomial time):
\begin{itemize}
\item $\bot$ if $G(\CT(x))$ does not admit a robust MIS including $P(x)$;
\item $M$ otherwise (with $M$ such a MIS).
\end{itemize}
\end{lemma}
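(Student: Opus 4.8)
The plan is to prove two implications establishing that \textbf{TestRMIS}$(\CT,x,\{P(x)\},\emptyset)$ exactly decides the existence of, and constructs, a robust MIS of $G(\CT(x))$ containing $P(x)$, plus a polynomial running-time bound. The time bound is immediate: forming $C$, computing $R$, testing bipartiteness, $2$-colouring the pieces $C_i$, assembling the $2$-SAT formula $F$, and solving it are all polynomial, and by hypothesis every child label $L(c)$ is already computed (stored as a vertex set $S\subseteq V$), so the union in line~25 is polynomial as well. The technical core is a structural characterisation of robust MISs inside a single biconnected component, which I would isolate as a preliminary claim: restricted to $C$, a set is the component-part of a robust MIS of $G(\CT(x))$ --- with each child $c$ given an IN/OUT status consistent with some available label of $L(c)$ --- if and only if it selects one side of a proper $2$-colouring in each connected piece $C_i$ of $C\setminus R$, is independent across the edges of $R$, and respects the forced status of the articulation points of $C$.

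The $\Leftarrow$ (soundness) direction is the safer one. Given a satisfying assignment, the IN-set chosen inside each $C_i$ has all its OUT vertices adjacent only to IN vertices (a $2$-colouring), so because $C$ is $2$-connected every OUT vertex keeps an incident edge --- hence an IN neighbour --- in every connected spanning subgraph; vertices made OUT through a \typePO\ label are self-covered by their own subtree, while a vertex $c$ made OUT through a \typePE\ label is an articulation point whose subtree attaches to the rest only through $c$, forcing $c$ to retain a component edge, necessarily to an IN vertex. Independence is guaranteed by the $2$-colouring together with the $R$-edge clauses (line~15), and the extension into each subtree via its \typePI\ label (child IN) or its \typePO/\typePE\ label (child OUT), justified by the well-labelledness hypothesis, yields a genuine robust MIS of $G(\CT(x))$; the clause of line~17 guarantees $P(x)$ is included.

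For the $\Rightarrow$ (completeness) direction I would start from an arbitrary robust MIS $M^\star$ of $G(\CT(x))$ with $P(x)\in M^\star$ and show $F$ is satisfiable. The key point is that no edge of $C\setminus R$ can be monochromatic under $M^\star$: a both-IN edge violates independence, while a both-OUT edge $\{u,v\}\notin R$ has an endpoint, say $u$, that is not self-covered (not \typePO); deleting all edges from $u$ to its component IN-neighbours while keeping $\{u,v\}$ leaves the graph connected (removing edges at a single vertex that retains one incident edge cannot disconnect a $2$-connected $C$) and uncovers $u$, contradicting robustness. Hence $M^\star$ induces a proper $2$-colouring of $C\setminus R$, i.e.\ a consistent assignment of the $x_i$; independence across $R$-edges satisfies line~15; and for each child $c$, the restriction of $M^\star$ to $\CT(c)$ is a robust MIS of the corresponding (possibly aerial) subgraph with $c$ in the prescribed state, so by well-labelledness $L(c)$ contains the matching \typePI, \typePO, or \typePE\ label, making the forced clauses of lines~10--13 consistent with the colouring (the hypothesis $L(P(x))=\emptyset$ ensures lines~9--13 add no spurious clause on $\ell(P(x))$). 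With $P(x)\in M^\star$ satisfying line~17, $F$ is satisfiable, so \textbf{TestRMIS} does not return $\bot$; combined with soundness, it returns $\bot$ precisely when no robust MIS of $G(\CT(x))$ contains $P(x)$, and otherwise returns one such MIS.

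I expect the main obstacle to be the robustness half of the structural claim: arguing correctly via removal of \emph{sets} of edges (not single edges) subject to keeping the subgraph connected, both that a monochromatic-OUT non-$R$ edge always breaks robustness and, conversely, that $2$-connectivity of $C$ together with the all-IN-neighbourhood property forces every OUT vertex (including \typePE\ articulation points) to stay covered. Handling the \typePO/\typePE\ distinction uniformly inside this argument, and checking that the aerial-graph convention underlying \typePE\ in Definition~\ref{def:welllabeledabcst} matches the external coverage actually supplied by the component, is the delicate bookkeeping.
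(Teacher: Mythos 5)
Your proposal is correct and takes essentially the same route as the paper's proof: the same soundness/completeness split around the 2-SAT formula, the same assembly of a robust MIS of $G(\CT(x))$ from a satisfying assignment together with child-label witnesses (\typePI for IN children, \typePO or the aerial \typePE variant for OUT children), and the same observation that a robust MIS containing $P(x)$ forces $C\setminus R$ to be bipartite and induces a satisfying assignment. The only divergence is cosmetic: the paper certifies robustness of the assembled set via the equivalence of \cite{DKP15} (a MIS is robust iff deleting all edges between any non-member and the MIS disconnects the graph), whereas you argue directly from the $2$-connectivity of $C$; if anything your completeness direction is more careful than the paper's, which dismisses monochromatic-OUT edges outside $R$ with a one-line appeal to maximality and does not explicitly verify the child-label clauses or the restriction-to-subtree claims that you spell out.
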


\begin{proof}
Let $x$ be a node of $\C\setminus\{r\}$ such that, for each $v$ of $C(x)$, $\CT(v)$ is well-labeled, $L(v)$ does not contain a label of type \typeN, and $L(P(x))=\emptyset$.

First, assume that \textbf{TestRMIS}$(\CT,x,\{P(x)\},\emptyset)$ is executed when $G(\CT(x))$ does not admit a robust MIS including $P(x)$. Then, we are going to prove that, if the test of Line 03 of \textbf{TestRMIS} is false, the one of Line 20 of \textbf{TestRMIS} is true (and hence that \textbf{TestRMIS} returns necessarily $\bot$ in this case). By contradiction, assume that $F$ (build up to Line 19 of \textbf{TestRMIS}) admits a satisfying assignment $SA$.

As, for each $v$ of $C(x)$, $\CT(v)$ is well-labeled and $L(v)$ does not contain a label of type \typeN, each $G(\CT(v))$ admits a robust MIS or $G_a(\CT(v))$ admits a robust MIS including $a$ by definition. That allows us to define the following sets:
\begin{itemize}
\item $R=\{v\in C|\ell(v)=true \text{ in } SA \}$;
\item For any $c\in C(x)$ such that $\ell(c)=true$ in $SA$, $M_c$ is a robust MIS of $G(\CT(c))$ such that $c\in M_c$;
\item For any $c\in C(x)$ such that $\ell(c)=false$ in $SA$, $M_c$ is a robust MIS of $G(\CT(c))$ such that $c\notin M_c$ if such a MIS exists, $M_c\cup\{a\}$ is a robust MIS of $G_a(\CT(c))$ otherwise.
\end{itemize}
Then, we are going to prove that the set $M=R\bigcup(\{M_c|c\in C(x)\})$ is a robust MIS of $G(\CT(x))$. Note that $R=M|_x$ (by construction) and $P(x)\in M$ (thanks to the clause introduced in $F$ on Line 17 of \textbf{TestRMIS}). 

\begin{description}
\item[Independence of $M$:] As each $M_c$ is independent and covers the articulation point that connects $G(\CT(c))$ to $C$ for each $c\in C(x)$ by construction, it remains to prove that $R$ is independent. Let $e=\{u,v\}$ be an edge of $C$. If $e\in R$, then the clause introduced in $F$ on Line 15 of \textbf{TestRMIS} ensures that labels of $u$ and $v$ cannot be simultaneously true in $SA$. Otherwise, $e$ belongs to $C\setminus R$ (that is bipartite) and hence, the labeling done on Line 08 of \textbf{TestRMIS} ensures us that labels of $u$ and $v$ cannot be simultaneously true in $SA$. Then, the construction of $R$ guarantees its independence.

\item[Maximality of $M$:] As each $M_c$ is maximal (in $G(\CT(c))$ or $G_a(\CT(c))$ depending on the case) for each $c\in C(x)$ by construction, it remains to prove that $R$ is maximal in $C$. Let $u\in C$ be a node that does not belong to $M$. That implies that $\ell(u)$ is false in $SA$ and then that $M_u$ is a robust MIS of $G(\CT(u))$ such that $u\notin M_u$ if such a MIS exists, a robust MIS of $G_a(\CT(u))$ such that $a\in M_u$ otherwise. 

In the first case, $u$ has a neighbor in $M_u$ by maximality of $M_u$. In the second case, as $\CT(u)$ is well-labeled by assumption, we know that $L(u)$ contains a label of type \typePE and hence does not contains a label of type \typePO. Then, no adjacent edge to $u$ in $C$ belongs to $R$. As a consequence, $u$ and its neighbors belongs to the same connected component of $C\setminus R$ and receive opposite labels on Line 08 of \textbf{TestRMIS}. In both cases, $u$ has a neighbor in $M$, that proves its maximality.

\item[Robustness of $M$:] The robustness of $M$ is proved by using the following equivalence (proved by \cite{DKP15} in the case of minimal dominating set but easily translatable to MIS): a MIS $M$ of a graph $G$ is robust if and only if, for any node $u$ not in $M$, removing all edges between $u$ and a node of $M$ disconnects $G$. Let $u$ be a node of $G(\CT(x))$ that does not belong to $M$. 

Assume first that $u\in C$. If $u$ has no adjacent edge in the set $R$ defined in Line 02 of \textbf{TestRMIS}, then all its neighbors in $C$ belongs to $M$ by the labeling done on Line 08 of \textbf{TestRMIS}. By definition of a biconnected component, the removing of all edges between $u$ and its neighbors in $C$ deconnects $G$. Otherwise (\ie $u$ has at least one adjacent edge in the set $R$), that means that $L(u)$ contains a label of type \typePO. As $\CT(u)$ is well-labeled by assumption, we know that $M_u$ is a robust MIS of $G(\CT(u))$ such that $u\notin M_u$. By robustness of $M_u$ on $G(\CT(u))$, we know that the removing of all edges between $u$ and its neighbors of $G(\CT(u))$ in $M_u$ deconnects $G(\CT(u))$ (hence $G$).

Assume now that $u\notin C$. The result is easily proved by the robustness of $M_c$ where $c$ is the child $x$ such that $u\in\CT(c)$.
\end{description}

The set $M$ is hence a robust MIS of $G(\CT(x))$ such that $P(x)\in M$, that contradicts the assumption that $G(\CT(x))$ does not admit such a MIS. In conclusion, \textbf{TestRMIS}$(\CT,x,\{P(x)\},\emptyset)$ returns $\bot$ if $G(\CT(x))$ that does not admit a robust MIS $M$ such that $P(x)\in M$.

Second, assume that $G(\CT(x))$ admits a robust MIS $M$ such that $P(x)\in M$. If two neighbors $u$ and $v$ of $C$ do not belongs to $M$, then the edge $\{u,v\}$ belongs to the set $R$ defined in Line 02 of \textbf{TestRMIS} (otherwise, we obtain a contradiction with the maximality of $M$). Then, the subgraph $C\setminus R$ is bipartite (one partition is $M|_x$, the other is $C\setminus M|_x$) and hence \textbf{TestRMIS} does not return $\bot$ on Line 04. As we can deduce a satisfying assignment to the 2-SAT formula (build up to Line 19) from $M$ (it is sufficient to assign all $x_i$ ---$i\in\{1,\ldots,k\}$--- to have $\forall v\in C, \ell(v)=true\Leftrightarrow v\in M$), we can deduce that \textbf{TestRMIS} does not return $\bot$ on Line 21. In conclusion, \textbf{TestRMIS} returns a set $M$ on Line 25. We know, by the construction of this set and by the proof of the first case, that $M$ is a robust MIS of $G(\CT(x))$ such that $P(x)\in M$.

To conclude the proof, note that all instructions of \textbf{TestRMIS} are polynomial in the size of $C$ and are repeated at most $|C|$ times, that implies that the running time of \textbf{TestRMIS} is polynomial in the size of $G$.
\end{proof}

The three following lemmas shows similar results depending on the parameters of the Function \textbf{TestRMIS}. Their proofs are similar to the previous one.

\begin{lemma}\label{lem:testrmis2}
For any node $x\in\C\setminus\{r\}$ such that, for each $v$ of $C(x)$, $\CT(v)$ is well-labeled, $L(v)$ does not contain a label of type \typeN, and $L(P(x))=\emptyset$, \textbf{TestRMIS}$(\CT,x,\emptyset,\{P(x)\})$ returns (in polynomial time):
\begin{itemize}
\item $\bot$ if $G(\CT(x))$ does not admit a robust MIS not including $P(x)$;
\item $M$ otherwise (with $M$ such a MIS).
\end{itemize}
\end{lemma}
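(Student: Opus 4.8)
The plan is to mirror the proof of Lemma~\ref{lem:testrmis1} almost verbatim, since the only structural difference between the two statements is the target constraint on the attachment point $P(x)$. In Lemma~\ref{lem:testrmis1} the call \textbf{TestRMIS}$(\CT,x,\{P(x)\},\emptyset)$ forces $\ell(P(x))=\mathtt{true}$ through the clause added on Line~17 (the $IN$ loop), whereas here the call \textbf{TestRMIS}$(\CT,x,\emptyset,\{P(x)\})$ forces $\ell(P(x))=\mathtt{false}$ through the clause added on Line~19 (the $OUT$ loop). Consequently every occurrence of ``$P(x)\in M$'' in the previous argument is replaced by ``$P(x)\notin M$'', while all remaining machinery (the bipartite split $C\setminus R$, the component variables $x_i$, the incorporation of the children's tags, and the robustness characterization from~\cite{DKP15}) carries over unchanged.

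First I would treat the direction where $G(\CT(x))$ admits no robust MIS avoiding $P(x)$. Assuming the bipartiteness test on Line~03 is false (otherwise \textbf{TestRMIS} already returns $\bot$ on Line~04), I suppose for contradiction that the $2$-SAT formula $F$ built up to Line~19 has a satisfying assignment $SA$. Exactly as before, I set $R=\{v\in C\mid \ell(v)=\mathtt{true}\}$ and, for each child $c\in C(x)$, select a robust MIS $M_c$ of $G(\CT(c))$ (or of $G_a(\CT(c))$ when $c$ carries a \typePE tag and is set false) consistent with $\ell(c)$; since every $\CT(c)$ is well-labeled and bears no \typeN tag, such a choice exists. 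The set $M=R\cup\bigcup_c M_c$ is then shown to be a robust MIS of $G(\CT(x))$ by replaying the three parts of the original argument: independence follows from the bipartite labeling on Line~08 together with the exclusion clauses on Line~15; maximality follows from the maximality of each $M_c$ and the bipartite coverage of $C\setminus R$; and robustness follows from the ``removal disconnects'' equivalence of~\cite{DKP15} applied case-by-case to vertices inside and outside $C$. The only new point is that the Line~19 clause now guarantees $\ell(P(x))=\mathtt{false}$, hence $P(x)\notin M$, so $M$ is a robust MIS avoiding $P(x)$, contradicting the hypothesis.

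For the converse I would start from an assumed robust MIS $M$ of $G(\CT(x))$ with $P(x)\notin M$ and verify that \textbf{TestRMIS} cannot return $\bot$. As in Lemma~\ref{lem:testrmis1}, no two $M$-free neighbours of $C$ can be joined by an edge lying outside the set $R$ of Line~02, so $C\setminus R$ is bipartite with parts $M|_x$ and $C\setminus M|_x$, and Line~04 is not reached. Reading off the assignment $\ell(v)=\mathtt{true}\iff v\in M$ yields a satisfying assignment of $F$ in which $\ell(P(x))=\mathtt{false}$, so the $OUT$ clause on Line~19 is honoured and Line~21 is not reached either; the algorithm therefore returns a set on Line~25, which by the first direction is a robust MIS of $G(\CT(x))$ not containing $P(x)$.

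I do not anticipate a genuine obstacle: the content is entirely symmetric to Lemma~\ref{lem:testrmis1}, and the only thing demanding care is confirming that the $OUT$ clause plays precisely the role that the $IN$ clause played before, and that it remains compatible with the tags propagated from the children---in particular, that a child tagged only \typePE does not force an adjacent edge into $R$, so that such a node lands in the same bipartite component as its neighbours and is covered by the opposite label on Line~08. The polynomial running-time bound is identical to that of Lemma~\ref{lem:testrmis1}, since $F$ still has size polynomial in $|C|$ and $2$-SAT is solved in linear time.
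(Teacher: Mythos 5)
Your proposal is correct and coincides with the paper's own treatment: the paper proves Lemma~\ref{lem:testrmis1} in full and then settles this lemma (and Lemmas~\ref{lem:testrmis3} and~\ref{lem:testrmis4}) with the remark that ``their proofs are similar to the previous one,'' which is exactly the verbatim mirroring you carry out, with the Line~19 $OUT$ clause forcing $\ell(P(x))=\mathtt{false}$ in place of the Line~17 $IN$ clause and every occurrence of $P(x)\in M$ replaced by $P(x)\notin M$. Your closing observation---that $L(P(x))=\emptyset$ keeps $P(x)$'s edges out of the set $R$, so its bipartite neighbours receive the opposite (true) label and dominate it---is precisely the point that makes the swap harmless.
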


\begin{lemma}\label{lem:testrmis3}
For any node $x\in\C\setminus\{r\}$ such that, for each $v$ of $C(x)$, $\CT(v)$ is well-labeled, $L(v)$ does not contain a label of type \typeN, and $L(P(x))=\{(\typePO,\emptyset)\}$, \textbf{TestRMIS}$(\CT,x,\emptyset,\emptyset)$ returns (in polynomial time):
\begin{itemize}
\item $\bot$ if $G_a(\CT(x))$ does not admit a robust MIS including $a$;
\item $M\setminus\{a\}$ otherwise (with $M$ such a MIS).
\end{itemize}
\end{lemma}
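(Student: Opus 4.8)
The plan is to mirror the proof of Lemma~\ref{lem:testrmis1}, exploiting the fact that setting $L(P(x))=\{(\typePO,\emptyset)\}$ before the call makes \textbf{TestRMIS}$(\CT,x,\emptyset,\emptyset)$ simulate the aerial graph $G_a(\CT(x))$. The key observation to establish first is that asking for a robust MIS of $G_a(\CT(x))$ that includes the aerial vertex $a$ is equivalent to asking for a robust MIS $M$ of $G(\CT(x))$ in which the attachment point $P(x)=AP(x)$ is left OUT but is treated as already covered from the outside. Indeed, $a$ is pendant and attached to $P(x)$ through a cut edge, so: (i) $a\in M$ forces $P(x)\notin M$ by independence; (ii) the edge $\{a,P(x)\}$ is never removable, hence $a$ robustly covers $P(x)$ for maximality; and (iii) $a$ imposes no other constraint inside $G(\CT(x))$. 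This is precisely the effect of the pre-set label: the loop of Lines 09--13 of \textbf{TestRMIS} forces $\ell(P(x))=\mathtt{false}$ (as $P(x)$ is an articulation point of $C$ now bearing a single \typePO tag), while that \typePO tag marks $P(x)$ as covered from outside, exactly as $a$ does.

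With this correspondence in hand, I would run the two implications exactly as in Lemma~\ref{lem:testrmis1}. For the first (if $G_a(\CT(x))$ admits no robust MIS containing $a$, then \textbf{TestRMIS} returns $\bot$), I would argue by contradiction: assuming that the bipartiteness test of Line 03 does not reject and that the formula $F$ built up to Line 19 is satisfiable, I construct $M=R\cup\bigcup_{c\in C(x)}M_c\cup\{a\}$ from a satisfying assignment exactly as before, and verify that $M$ is a robust MIS of $G_a(\CT(x))$ with $a\in M$. The independence, maximality, and robustness arguments are verbatim those of Lemma~\ref{lem:testrmis1}, the only new cases concerning $P(x)$ and $a$: independence is immediate since $\ell(P(x))=\mathtt{false}$ puts $P(x)\notin M$; maximality of $P(x)$ follows from $a\in M$; and robustness at $P(x)$ holds because deleting its edges to $M$ removes the cut edge $\{a,P(x)\}$ and hence disconnects $G_a(\CT(x))$. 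This contradicts the hypothesis, so $\bot$ is returned.

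For the converse, assuming $G_a(\CT(x))$ admits a robust MIS $M$ with $a\in M$, I would replay the corresponding argument of Lemma~\ref{lem:testrmis1} to read off from $M$ that $C\setminus R$ is bipartite and that the assignment $\ell(v)=\mathtt{true}\iff v\in M$ (which sets $\ell(P(x))=\mathtt{false}$, since $a\in M$ forces $P(x)\notin M$) satisfies $F$. Hence neither Line 04 nor Line 21 returns $\bot$, so \textbf{TestRMIS} outputs a set on Line 25, which by the first implication is a robust MIS of $G_a(\CT(x))$ containing $a$; its restriction to $V_x$, namely $M\setminus\{a\}$, is exactly what is returned. Polynomiality follows as in Lemma~\ref{lem:testrmis1}, since every instruction is polynomial in $|C|$ and the dominant cost is a single 2-SAT solve.

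The main obstacle I anticipate is not the adaptation of the independence/maximality/robustness bookkeeping, which is routine once the $a$/$P(x)$ correspondence is fixed, but rather justifying carefully that the pre-set label $L(P(x))=\{(\typePO,\emptyset)\}$ faithfully encodes ``covered from outside without contributing to the MIS''. One must check that $P(x)$ is genuinely treated like a \typePO child, so that it never needs an internal MIS neighbor for its own maximality, while simultaneously ensuring that the empty witness set $\emptyset$ carried by this label omits no coverage obligation inside $C$. Verifying this consistency against the exact wording of Lines 02, 09--13, and 25 of \textbf{TestRMIS} is the delicate point.
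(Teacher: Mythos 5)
Your proposal is correct and matches the paper's intent exactly: the paper gives no separate argument for this lemma, stating only that its proof is ``similar to the previous one'' (Lemma~\ref{lem:testrmis1}), and your adaptation---observing that the pre-set label $L(P(x))=\{(\typePO,\emptyset)\}$ forces $\neg\ell(P(x))$ via Lines 09--13 while the aerial vertex $a$ supplies maximality and (via the cut edge $\{a,P(x)\}$) robustness at $P(x)$---is precisely the intended modification. Your identification of the delicate point (that the \typePO tag on $P(x)$ faithfully encodes ``out but externally covered'') is also the right one, and your treatment of it is sound.
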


\begin{lemma}\label{lem:testrmis4}
If, for each $v$ of $C(r)$, $\CT(v)$ is well-labeled, $L(v)$ does not contain a label of type \typeN, \textbf{TestRMIS}$(\CT,r,\emptyset,\emptyset)$ returns (in polynomial time):
\begin{itemize}
\item $\bot$ if $G$ does not admit a robust MIS;
\item $M$ otherwise (with $M$ such a MIS).
\end{itemize}
\end{lemma}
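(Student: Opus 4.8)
The plan is to adapt, essentially verbatim, the proof of Lemma~\ref{lem:testrmis1} to the root, exploiting two simplifications that make the root the cleanest instance of the construction. Since $r$ has no parent, there is neither an attachment point to constrain nor an aerial vertex to add, and the sets $IN$ and $OUT$ passed to \textbf{TestRMIS} are both empty, so the clauses produced on Lines 16--19 are vacuous. Consequently the call \textbf{TestRMIS}$(\CT,r,\emptyset,\emptyset)$ must decide, without any side constraint, whether the whole graph $G=G(\CT(r))$ admits a robust MIS. I would prove the two directions of the stated equivalence, reusing the independence, maximality and robustness arguments of Lemma~\ref{lem:testrmis1} and pointing out only where the absence of an attachment point trivializes them; polynomiality is inherited unchanged, as $F$ has size polynomial in $|C|$ and $2$-SAT is solvable in polynomial time.

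For soundness I would assume the bipartiteness test on Line 03 passes and that $F$ admits a satisfying assignment $SA$, and build $M=R\cup\bigcup_{c\in C(r)}M_c$ exactly as before, where $R=\{v\in C\mid \ell(v)=\mathtt{true}\}$ (Line 24) and each $M_c$ is the robust MIS of $G(\CT(c))$, or of $G_a(\CT(c))$ with $a$ deleted, selected according to whether $\ell(c)$ is true or false; its existence is guaranteed by the well-labeling of $\CT(c)$ together with the absence of a \typeN label among the children. Independence and maximality carry over word for word. For robustness I would invoke the disconnection characterization of~\cite{DKP15}: for $u\notin M$ lying outside $C$, robustness of the relevant $M_c$ suffices; for $u\in C\setminus M$ with no incident edge in the removed edge set (Line 02), so that all its $C$-neighbors lie in $M$, I would use that $C$ is $2$-connected with at least three vertices (since $r\in\C$) to conclude that deleting the edges from $u$ to $M$ separates $u$ together with its hanging subtrees from $C\setminus\{u\}$, hence disconnects $G$; and for $u\in C\setminus M$ incident to a removed edge, so that $L(u)$ carries a \typePO label, I would use robustness of $M_u$ inside $G(\CT(u))$. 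Thus the set returned on Line 25 is a robust MIS of $G$.

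For completeness I would assume $G$ admits a robust MIS $M$ and show the algorithm does not return $\bot$. The key sublemma is that any two adjacent $u,v\in C$ with $u,v\notin M$ form an edge of the removed set: since $u$ and $v$ are non-adjacent to one another in $M$, each must be covered within its own subtree, for otherwise the~\cite{DKP15} criterion together with the $2$-connectivity of $C$ would let the removal of $u$'s (resp. $v$'s) edges to $M$ fail to disconnect $G$, contradicting robustness of $M$; restricting $M$ to each child subtree then yields a robust MIS of $G(\CT(c))$ whose exclusion of $AP(c)$ with internal coverage gives, by well-labeling, a \typePO label on both $u$ and $v$. Hence $C\setminus R$ is bipartite with parts $M\cap C$ and $C\setminus M$, so Line 03 does not fire, and the assignment $\ell(v)=\mathtt{true}\iff v\in M$ satisfies every clause of $F$: the bipartition clauses (Line 08), the exclusion clauses on removed edges (Line 15), and the articulation-point forcing clauses (Lines 09--13), the latter being consistent with $M$ precisely because a label reduced to a single type reflects, by well-labeling, a subtree that admits no robust MIS of the opposite inclusion status. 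Therefore Line 20 does not fire, and by the soundness direction the set produced on Line 25 is a robust MIS of $G$.

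I expect the main obstacle to be the interface between the local $2$-SAT/bipartite reasoning on $C$ and the global robustness of $M$ on $G$: specifically, justifying that restricting a globally robust MIS of $G$ to a child subtree is itself a robust MIS of that subtree (so that well-labeling can be applied) and, conversely, that recombining locally robust pieces preserves global robustness. Both hinge on the~\cite{DKP15} disconnection criterion and on the fact that each child subtree attaches to $C$ through a single cut vertex or bridge, so that any disconnection established locally persists in $G$. The root case is in fact the most transparent instance of this interface, since there is no attachment point above $C$ through which constraints would otherwise have to be propagated.
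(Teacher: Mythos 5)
Your proposal is correct and takes essentially the same approach as the paper, which proves Lemma~\ref{lem:testrmis1} in full and then simply declares the proofs of Lemmas~\ref{lem:testrmis2}--\ref{lem:testrmis4} ``similar to the previous one''---exactly the adaptation you carry out, with the right root simplifications (no attachment point, no aerial vertex, vacuous $IN$/$OUT$ clauses on Lines 16--19). If anything, your completeness step is more detailed than the paper's: where the paper justifies that an edge between two adjacent non-MIS vertices lies in the removed set by a one-line appeal to maximality, you correctly ground it in the disconnection criterion of~\cite{DKP15} together with the $2$-connectivity of the component, deducing the required \typePO labels via well-labeling.
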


\subsection{Functions LabelNode}\label{sub:labelnode}

In this section, we prove that each of the three functions \textbf{LabelNode} produces a well-labeled \ABCST rooted on a node $x$ provided that all \ABCST rooted on children of $x$ are well-labeled. 

\begin{lemma}\label{lem:articulation}
For any node $x\in\A$ such that, for each $v$ of $C(x)$, $\CT(v)$ is well-labeled and $L(v)$ does not contain a label of type \typeN, $\CT(x)$ is well-labeled after the execution (in polynomial time) of \textbf{LabelNodeA}$(\CT,x)$.
\end{lemma}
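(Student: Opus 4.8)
The plan is to fix an articulation point $x\in\A$ whose children in \CT each root a well-labeled \ABCST carrying no \typeN label, and to verify the four clauses of Definition~\ref{def:welllabeledabcst} for $\CT(x)$, matching each against one of the three rules of \textbf{LabelNodeA}. The structural engine is that $x$ is a cut vertex of $G(\CT(x))$: since $AP(c)=x$ for every child $c\in C(x)$, the graph $G(\CT(x))$ is exactly the union of the subgraphs $G(\CT(c))$ glued along the single shared vertex $x$, distinct children being otherwise vertex- and edge-disjoint. Hence a set $M$ is an MIS of $G(\CT(x))$ if and only if every restriction $M\cap V_c$ is an MIS of $G(\CT(c))$, the only coupling being the common status of $x$ together with the fact that $x$ may be dominated from \emph{any} of its children. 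Independence and maximality of a glued set therefore reduce to those of the pieces, so the whole argument is about translating the children's tags into the tag of $x$ and, crucially, about preserving robustness.

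First I would dispatch the \typePI clause. If $G(\CT(x))$ has a robust MIS $M$ with $x\in M$, each restriction is a robust MIS of $G(\CT(c))$ containing $x=AP(c)$, so by the inductive hypothesis every $L(c)$ carries \typePI; conversely, gluing the stored \typePI-sets (all containing $x$) along $x$ gives such an $M$, whose robustness follows vertex-by-vertex from the children since the case $u=x$ does not arise ($x\in M$). This is precisely lines~01--02. The delicate part is the joint treatment of \typePO and \typePE, both concerning $x\notin M$. Restricting a robust MIS with $x\notin M$ to a child $c$ yields a robust MIS of $G(\CT(c))$ with $x\notin M$, \emph{except} that $x$ may be dominated by a neighbour lying in another child; from $c$'s viewpoint this is external domination, so $L(c)$ carries \typePO when $c$ dominates $x$ internally and \typePE when it relies on the outside neighbour, and maximality forces at least one child to dominate $x$ internally, i.e. to carry \typePO. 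This is exactly the rule on lines~05--06. For \typePE, the aerial vertex $a$ of $G_a(\CT(x))$ acts as an external dominator of $x$, so every child must admit an externally dominated configuration, i.e. carry \typePE; and since the inductive hypothesis makes \typePO and \typePE mutually exclusive on each child, ``all children \typePE'' already entails ``no child \typePO'', hence ``$G(\CT(x))$ admits no robust MIS excluding $x$'', which is the side condition in Definition~\ref{def:welllabeledabcst}(4). This matches lines~03--04.

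The main obstacle is the converse (union) direction for \typePO and \typePE, where I must prove the glued set is genuinely \emph{robust}. I would use the characterization already invoked in Lemma~\ref{lem:testrmis1} (following~\cite{DKP15}): an MIS $M$ is robust iff, for every $u\notin M$, deleting all edges between $u$ and $M$ disconnects the graph. For $u\neq x$ every edge at $u$ stays inside the child $c$ containing it, so the child's robustness supplies the disconnection --- in $G(\CT(c))$ for a \typePO-child, and in $G_a(\CT(c))$ for a \typePE-child, where one notes that the aerial $a$ (pendant on $x$) is never severed when deleting $u$'s edges, so the piece cut off avoids $a$ and is a genuine piece of $G(\CT(c))$, which in $G(\CT(x))$ is separated from $x$ and hence from the other children (the role played by $a$ in $G_a(\CT(c))$ being now taken over by the connectivity the siblings provide through $x$). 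For $u=x$, the $M$-neighbours of $x$ lie exactly in the \typePO-children; deleting them disconnects each such child internally by that child's robustness, hence disconnects $G(\CT(x))$. The \typePE case for $x$ itself (in $G_a(\CT(x))$) is the same, with $x$'s unique $M$-neighbour being $a$, whose deletion isolates the pendant $a$. Assembling these cases establishes clauses~(2)--(4).

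Finally, clause~(1) follows by contraposition: if no rule fires, the now-proved equivalences in clauses~(2)--(4) show that $G(\CT(x))$ has no robust MIS and $G_a(\CT(x))$ none containing $a$, so the absence of any positive tag coincides with the \typeN situation demanded by clause~(1). Polynomiality is immediate, as the procedure performs a constant number of passes over $C(x)$, each involving only label tests and set unions bounded by $|V|$, so the whole call runs in time polynomial in the size of $G$. The single genuinely subtle step throughout is the robustness bookkeeping of the third paragraph, where the internal-versus-external domination of the cut vertex $x$ must be matched exactly to the \typePO/\typePE tags of the children.
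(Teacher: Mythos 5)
Your proof is correct and takes essentially the same route as the paper's: both reduce the well-labeling of $\CT(x)$ to four gluing equivalences at the cut vertex $x$ that the three rules of \textbf{LabelNodeA} implement verbatim, the paper merely asserting these equivalences ``by definition of an articulation point'' while you additionally verify them using the disconnection characterization of robustness from~\cite{DKP15} (which the paper invokes only inside Lemma~\ref{lem:testrmis1}). One small slip worth fixing: the existence of at least one \typePO-tagged child follows not from maximality alone, as you state, but from robustness of the glued MIS at $u=x$ --- the piece separated by deleting the edges from $x$ to $M$ lies inside a single child, whose restriction is then a robust MIS of that child excluding $x$ --- and your own disconnection bookkeeping in the third paragraph supplies exactly this argument.
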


\begin{proof}
Observe that, for any node $x\in\A$, we have the following properties by definition of an articulation point:
\begin{itemize}
\item $G(\CT(x))$ does not admit a robust MIS and $G_a(\CT(x))$ does not admit a robust MIS including $a$ if and only if there exists at least one child $c\in C(x)$ such that $G(\CT(c))$ does not admit a robust MIS and $G_a(\CT(c))$ does not admit a robust MIS including $a$.
\item $G(\CT(x))$ admits a robust MIS including $AP(x)$ if and only if, for every child $c\in C(x)$, $G(\CT(c))$ admits a robust MIS $M_c$ including $AP(c)=x$. Moreover, $M=\bigcup_{c\in C(x)}M_c$ is such a MIS of $G(\CT(x))$.
\item $G(\CT(x))$ does not admit a robust MIS not including $AP(x)=x$ and $G_a(\CT(x))$ admits a robust MIS including $a\in M$ if and only if, for every child $c\in C(x)$, $G(\CT(c))$ does not admit a robust MIS not including $AP(c)=x$ and $G_a(\CT(c))$ admits a robust MIS $M_c$ including $a$. Moreover, $M=\bigcup_{c\in C(x)}M_c$ is such a MIS of $G_a(\CT(x))$.
\item $G(\CT(x))$ admits a robust MIS not including $AP(x)=x$ if and only, for every child $c\in C(x)$, $G(\CT(c))$ admits a robust MIS $M_c$ not including $AP(c)=x$ or $G_a(\CT(c))$ admits a robust MIS including $a$ and there exists at least one child $c\in C(x)$ such that $G(\CT(c))$ admits a robust MIS $M_c$ including $AP(c)=x$. Moreover, $M=\bigcup_{c\in C(x)}M_c$ is such a MIS of $G(\CT(x))$.

\end{itemize}

Let $x$ be a node of $\A$ such that, for each $v$ of $C(x)$, $\CT(v)$ is well-labeled and $L(v)$ does not contain a label of type \typeN. Then, note that the construction of \textbf{LabelNodeA} is strictly based on the previous set of properties, implying that $\CT(x)$ is well-labeled after the execution of \textbf{LabelNodeA}$(\CT,x)$.

To conclude the proof, note that all tests of \textbf{LabelNodeA} are performed in polynomial time (since the size of the set of children is bounded by the size of $G$), hence \textbf{LabelNodeA} terminates in a polynomial time.
\end{proof}

\begin{lemma}\label{lem:bridge}
For any node $x\in\B$ with $C(x)=\{v\}$ such that $\CT(v)$ is well-labeled and $L(v)$ does not contain a label of type \typeN, $\CT(x)$ is well-labeled after the execution (in constant time) of \textbf{LabelNodeB}$(\CT,x)$.
\end{lemma}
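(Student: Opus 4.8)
The plan is to verify directly the four conditions of Definition~\ref{def:welllabeledabcst} for the node $x$ itself, since all nodes strictly below $x$ (which are exactly the nodes of $\CT(v)$) are already well-labeled by hypothesis. The structural fact that drives everything is that, inside $G(\CT(x))$, the attachment point $AP(x)=P(x)$ appears as a \emph{pendant} vertex whose unique neighbor is $v$, and the connecting edge $\{P(x),v\}$ is a bridge, hence a cut edge. Consequently $G(\CT(x))$ is isomorphic to the aerial graph $G_a(\CT(v))$ under the identification $P(x)\leftrightarrow a$, whereas $G_a(\CT(x))$ is obtained from $G(\CT(v))$ by hanging the length-two path $v$–$P(x)$–$a$ off $v$. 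I would record these two isomorphisms first, because they let me translate every statement about robust MISs of $G(\CT(x))$ or $G_a(\CT(x))$ into a statement about $G(\CT(v))$ and $G_a(\CT(v))$, that is, into the labels already computed for $v$.

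The main tool is the disconnection criterion for robustness used in the proof of Lemma~\ref{lem:testrmis1} (a MIS is robust iff, for every node $u$ outside it, deleting all edges from $u$ to the MIS disconnects the graph). Because $\{P(x),v\}$ is a cut edge, this criterion makes coverage across the bridge automatically robust: if $P(x)\notin M$ then $P(x)$ is robustly covered exactly when $v\in M$, and if $v\notin M$ then the single edge to $P(x)\in M$ already disconnects $P(x)$, so the robust coverage of $v$ only has to be guaranteed inside $G(\CT(v))$. Using this I carry out the case analysis on the membership of $P(x)$. For condition~2, a robust MIS of $G(\CT(x))$ containing $P(x)$ forces $v\notin M$ and, through $G(\CT(x))\cong G_a(\CT(v))$, corresponds exactly to a robust MIS of $G_a(\CT(v))$ containing the aerial, which exists iff $v$ carries \typePO or \typePE; this matches the rules tagging $x$ with \typePI and stored set $\{P(x)\}\cup R_v$ precisely in those cases. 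For condition~3, a robust MIS of $G(\CT(x))$ avoiding $P(x)$ forces $v\in M$ (to cover the pendant $P(x)$) and restricts to a robust MIS of $G(\CT(v))$ containing $v$, i.e. $v$ carries \typePI; conversely any \typePI set $R_v$ of $v$ is already such a MIS, matching the rule tagging $x$ with \typePO and stored set $R_v$ iff $v$ has \typePI.

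For condition~4 I would use the length-two-path picture: $G_a(\CT(x))$ admits a robust MIS containing its aerial iff $v$ carries \typePI or \typePO, while $G(\CT(x))$ fails to admit a robust MIS avoiding $P(x)$ iff $v$ lacks \typePI (by the previous case). Hence the \typePE condition amounts to ``$v$ has \typePO but not \typePI'', with stored set $R_v$. Condition~1 then holds vacuously: since $L(v)$ has no \typeN tag, $v$ carries at least one of \typePI/\typePO/\typePE, and each of these produces a label for $x$ together with an explicit robust MIS of $G(\CT(x))$; thus $G(\CT(x))$ always admits a robust MIS, so the \typeN premise never holds and \textbf{LabelNodeB} correctly never tags $x$ with \typeN. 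The constant running time is immediate, as the bridge has a single child and \textbf{LabelNodeB} performs only a fixed number of label look-ups and set unions.

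The step I expect to be the main obstacle is condition~4 and its interaction with the mutual exclusivity of \typePO and \typePE. One must be careful that $G_a(\CT(x))$ is \emph{not} literally $G_a(\CT(v))$: it contains the extra vertex $P(x)$ between $v$ and the aerial, so the aerial no longer helps to cover $v$, and the coverage of $v$ must instead come from inside $G(\CT(v))$. This is exactly why \typePE of $x$ requires \typePO rather than merely \typePE of $v$, and why the assignment of \typePE to $x$ must be guarded by ``only if $x$ is not already \typePO'' (as in the informal description): without this guard, a child $v$ carrying both \typePI and \typePO would force $x$ to receive \typePO, \typePI and \typePE simultaneously, contradicting the exclusivity demanded by Definition~\ref{def:welllabeledabcst} and breaking the ``only if'' direction of condition~4.
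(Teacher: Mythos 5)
Your proof is correct and follows essentially the same route as the paper's: the paper's proof of Lemma~\ref{lem:bridge} likewise reduces to four structural equivalences across the bridge (\typePO of $x$ iff \typePI of $v$; \typePI of $x$ iff \typePO or \typePE of $v$; \typePE of $x$ from \typePO of $v$; and the \typeN case), asserted ``by definition of a bridge,'' and then observes that \textbf{LabelNodeB} implements exactly these rules. Your two structural identifications --- $G(\CT(x))\cong G_a(\CT(v))$ under $P(x)\leftrightarrow a$, and $G_a(\CT(x))$ being $G(\CT(v))$ with the path $v$--$P(x)$--$a$ attached --- together with the disconnection criterion for robustness (the one the paper invokes from \cite{DKP15} in the proof of Lemma~\ref{lem:testrmis1}) are precisely the content the paper leaves implicit, so your argument is a faithful, more explicit rendering of the same idea. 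One point where you are in fact more careful than the paper: your closing paragraph on the \typePE guard. The paper's fourth equivalence is stated as ``\typePE of $x$ iff $G(\CT(v))$ admits a robust MIS not including $v$,'' \emph{without} the conjunct that $v$ must not also carry \typePI; as you observe, when $v$ carries both \typePI and \typePO this ``only if'' direction fails, and the pseudo-code of \textbf{LabelNodeB} (Line~04) indeed adds \typePE unconditionally --- only the informal description in Section~\ref{sec:existsMIS} carries the guard ``if possible ($v$ is not already tagged \typePO).'' Your formulation, \typePE of $x$ iff ``$v$ has \typePO but not \typePI,'' is the statement actually required by condition~4 of Definition~\ref{def:welllabeledabcst}, so your version of the argument closes a small gap that the paper's own proof glosses over.
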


\begin{proof}
Observe that, for any node $x\in\B$ with $C(x)=\{v\}$, we have the following properties by definition of a bridge:
\begin{itemize}
\item $G(\CT(x))$ does not admit a robust MIS and $G_a(\CT(x))$ does not admit a robust MIS including $a$ if and only if $G(\CT(v))$ does not admit a robust MIS and $G_a(\CT(v))$ does not admit a robust MIS including $a$.

\item $G(\CT(x))$ admits a robust MIS including $AP(x)$ if and only $G(\CT(v))$ does not admit a robust MIS including $AP(v)=v$ and $G_a(\CT(v))$ admits a robust MIS $M_a$ including $a$ or $G(\CT(v))$ admits a robust MIS $M_v$ not including $AP(v)=v$. Moreover, $(M_a\setminus\{a\})\cup\{AP(x)\}$ and $M_v\cup\{AP(x)\}$ are respectively such a MIS of $G(\CT(x))$.

\item $G(\CT(x))$ admits a robust MIS not including $AP(x)$ if and only $G(\CT(v))$ admits a robust MIS $M_v$ including $AP(v)=v$. Moreover, $M_v$ is such a MIS of $G(\CT(x))$.

\item $G(\CT(x))$ does not admit a robust MIS not including $AP(x)$ and $G_a(\CT(x))$ admits a robust MIS including $a$ if and only if $G(\CT(v))$ admits a robust MIS $M_v$ not including $AP(v)$. Moreover, $M_v$ is such a MIS of $G(\CT(x))$.

\end{itemize}

Let $x$ be a node of $\B$ with $C(x)=\{v\}$ such that $\CT(v)$ is well-labeled and $L(v)$ does not contain a label of type \typeN. Then, note that the construction of \textbf{LabelNodeB} is strictly based on the previous set of properties, implying that $\CT(x)$ is well-labeled after the execution of \textbf{LabelNodeB}$(\CT,x)$.

To conclude the proof, note that all tests of \textbf{LabelNodeB} are performed in constant time, hence \textbf{LabelNodeB} terminates in a constant time.
\end{proof}

\begin{lemma}\label{lem:component}
For any node $x\in\C\setminus\{r\}$ such that, for each $v$ of $C(x)$, $\CT(v)$ is well-labeled and $L(v)$ does not contain a label of type \typeN, $\CT(x)$ is well-labeled after the execution (in polynomial time) of \textbf{LabelNodeC}$(\CT,x)$.
\end{lemma}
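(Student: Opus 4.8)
The plan is to verify, clause by clause, that the labels placed in $L(x)$ by \textbf{LabelNodeC}$(\CT,x)$ are exactly those prescribed by Definition~\ref{def:welllabeledabcst}, relying on the three preceding lemmas on \textbf{TestRMIS}. First I would record the two structural facts that make those lemmas applicable: since $x\in\C\setminus\{\R\}$ its attachment point is $AP(x)=P(x)$, an articulation point ($\in\A$) shared with the component; and because \CT is processed bottom-up (children before parents), at the moment \textbf{LabelNodeC}$(\CT,x)$ runs, $P(x)$ has not yet been labeled, so $L(P(x))=\emptyset$. This is precisely the hypothesis required by Lemmas~\ref{lem:testrmis1} and~\ref{lem:testrmis2}.

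Next I would treat the \typePI and \typePO clauses, which are immediate. Lines~01--03 call \textbf{TestRMIS}$(\CT,x,\{P(x)\},\emptyset)$; by Lemma~\ref{lem:testrmis1} this returns a robust MIS of $G(\CT(x))$ containing $P(x)=AP(x)$ exactly when one exists, and $\bot$ otherwise, so $(\typePI,M)$ is inserted iff $G(\CT(x))$ admits such a MIS (clause~2). Symmetrically, lines~04--06 call \textbf{TestRMIS}$(\CT,x,\emptyset,\{P(x)\})$ and Lemma~\ref{lem:testrmis2} yields clause~3 for \typePO.

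The delicate clause is \typePE (clause~4), and it is where the control flow does the real work. The \typePE branch (lines~07--12) is entered exactly in the \textbf{Else} of the \typePO test, \ie precisely when $G(\CT(x))$ admits no robust MIS avoiding $AP(x)$; this is one half of clause~4's condition and is enforced automatically by the guard. For the other half, the algorithm temporarily sets $L(P(x)):=\{(\typePO,\emptyset)\}$ before calling \textbf{TestRMIS}$(\CT,x,\emptyset,\emptyset)$ and restores $L(P(x)):=\emptyset$ afterwards. I would explain why this forces the aerial computation: inside \textbf{TestRMIS}, an articulation point carrying only a \typePO (or \typePE) label forces $\ell(P(x))=\mathtt{false}$ (lines~12--13 of \textbf{TestRMIS}), \ie $P(x)$ is excluded from yet covered within the component, which is exactly the situation created by an aerial vertex $a\in M$ adjacent to $AP(x)$. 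With $L(P(x))=\{(\typePO,\emptyset)\}$, Lemma~\ref{lem:testrmis3} applies and returns $M\setminus\{a\}$ iff $G_a(\CT(x))$ admits a robust MIS containing $a$, so $(\typePE,M\setminus\{a\})$ is inserted iff that MIS exists and (by the guard) no robust MIS of $G(\CT(x))$ avoiding $AP(x)$ exists---exactly clause~4.

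Finally I would dispatch the \typeN clause by bookkeeping. Lines~13--14 set $L(x)=\{(\typeN,\emptyset)\}$ iff no \typePI, \typePO, or \typePE label was inserted. By the above, absence of both \typePI and \typePO means $G(\CT(x))$ admits no robust MIS at all (every MIS either contains or avoids $AP(x)$); and since no \typePO forces entry into the \typePE branch, absence of \typePE then means $G_a(\CT(x))$ admits no robust MIS containing $a$. Conversely, these two failures produce an empty label set, which is precisely clause~1. Polynomial time is immediate: \textbf{LabelNodeC} makes at most three calls to \textbf{TestRMIS}, each polynomial by Lemmas~\ref{lem:testrmis1}--\ref{lem:testrmis3}, plus constant overhead. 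The main obstacle I anticipate is making the \typePE argument airtight---in particular justifying rigorously that the temporary \typePO label on $P(x)$ makes \textbf{TestRMIS}$(\CT,x,\emptyset,\emptyset)$ compute the \emph{aerial} graph $G_a(\CT(x))$ rather than $G(\CT(x))$, and confirming that restoring $L(P(x))$ leaves the remainder of the run unaffected.
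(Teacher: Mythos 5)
Your proof is correct and follows exactly the route the paper takes: its proof of Lemma~\ref{lem:component} is a one-line appeal to Definition~\ref{def:welllabeledabcst}, the structure of \textbf{LabelNodeC}, and Lemmas~\ref{lem:testrmis1}--\ref{lem:testrmis4}, and your clause-by-clause verification (including the temporary \typePO label on $P(x)$ triggering the aerial case of Lemma~\ref{lem:testrmis3}, and its restoration) is simply that argument spelled out in full. The one detail you flag as an obstacle---why the temporary label yields $G_a(\CT(x))$---is precisely the content of Lemma~\ref{lem:testrmis3}'s hypothesis $L(P(x))=\{(\typePO,\emptyset)\}$, so citing it as you do already closes the gap.
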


\begin{proof}
This result directly follows from the definition of a well-labeled \ABCST, the construction of the function \textbf{LabelNodeC}, and Lemmas \ref{lem:testrmis1} to \ref{lem:testrmis4}.
\end{proof}

\subsection{Labelling of the \ABCT}\label{sub:prooflabelling}

We are now in measure to characterize the labeling made by Functions \textbf{LabelSubTree} and \textbf{FindRMIS} (up to Line 19).

\begin{lemma}\label{lem:subtree}
For any node $x\in\A\cup\B\cup\C\cup\PV\setminus\{r\}$, $\CT(x)$ is well-labeled after the execution (in polynomial time) of \textbf{LabelSubTree}$(\CT,x)$.
\end{lemma}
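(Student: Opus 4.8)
The plan is to prove the statement by structural induction on the \ABCST $\CT(x)$, i.e. by induction on the height of $x$ in \CT, mirroring the recursion of \textbf{LabelSubTree}. Almost all of the technical work has already been discharged by Lemmas~\ref{lem:articulation}, \ref{lem:bridge}, and \ref{lem:component}, so this lemma is essentially an assembly step: it glues those results along the tree and separately verifies the two short-circuit behaviours of \textbf{LabelSubTree}, namely the base labelling of pendant vertices on Line~13 and the \typeN-propagation on Lines~03--04.

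For the base case I would treat the leaves of \CT. First I note that every pendant vertex is a leaf: a pendant edge is a bridge, so a vertex $p\in\PV$ has exactly one neighbour in \CT (that bridge), hence degree one in \CT, and being distinct from the root it is a leaf. Thus a leaf is either in \PV or a childless component in \C. If $x\in\PV$, then \textbf{LabelSubTree} sets $L(x)=\{(\typePI,\{x\}),(\typePE,\emptyset)\}$, and I would check this directly against Definition~\ref{def:welllabeledabcst}: $G(\CT(x))$ is the isolated vertex $x$, whose unique MIS is $\{x\}$, which is robust and contains $AP(x)=x$ (giving \typePI with witness $\{x\}$); there is no robust MIS avoiding $x$ (so no \typePO, as required); and $G_a(\CT(x))$ is the single edge $\{a,x\}$, which admits the robust MIS $\{a\}$ (giving \typePE with witness $\emptyset$), while $G(\CT(x))$ admits no robust MIS avoiding $AP(x)$. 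If instead $x\in\C$ is childless, the hypotheses of Lemma~\ref{lem:component} hold vacuously, so it applies and yields that $\CT(x)$ is well-labeled.

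For the inductive step, $x$ is internal, hence $x\in\A\cup\B\cup\C$ (pendant vertices being leaves), and by the induction hypothesis every child subtree $\CT(c)$ is well-labeled. I would split according to the test on Line~03. If some child $c$ has $L(c)=\{(\typeN,\emptyset)\}$, then \textbf{LabelSubTree} sets $L(x)=\{(\typeN,\emptyset)\}$, and I must argue this is correct, i.e. that $G(\CT(x))$ admits no robust MIS and $G_a(\CT(x))$ admits no robust MIS containing $a$. The key structural fact is the one already exploited in the first items of the observation lists inside Lemmas~\ref{lem:articulation}, \ref{lem:bridge}, and \ref{lem:component}: the child subgraph $G(\CT(c))$ is attached to the remainder of $G(\CT(x))$ through a single cut vertex or a single bridge, so any robust MIS $M$ of $G(\CT(x))$ (resp. of $G_a(\CT(x))$ containing $a$) restricts to $G(\CT(c))$; if $AP(c)$ is covered from inside, the restriction is a robust MIS of $G(\CT(c))$, and if it is covered only from outside, adjoining the aerial vertex gives a robust MIS of $G_a(\CT(c))$ containing $a$. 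Since $\CT(c)$ is well-labeled and \typeN, Definition~\ref{def:welllabeledabcst}(1) forbids both, a contradiction, so $L(x)=\{(\typeN,\emptyset)\}$ indeed satisfies the definition for $x$. Otherwise no child carries a \typeN label (equivalently, since the children are well-labeled, no child subgraph fails both existence conditions of Definition~\ref{def:welllabeledabcst}(1)), so the hypotheses of the relevant LabelNode lemma are met, and I would invoke Lemma~\ref{lem:articulation}, \ref{lem:bridge}, or \ref{lem:component} according to whether $x\in\A$, $x\in\B$, or $x\in\C\setminus\{r\}$, concluding that $\CT(x)$ is well-labeled after the matching call in \textbf{LabelSubTree}.

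Finally, the polynomial running time follows by summing costs: \textbf{LabelSubTree} visits each of the polynomially many vertices of \CT once, performing at each either the constant-time Line~13 assignment, the \typeN short-circuit, or a single LabelNode call whose cost is polynomial by the cited lemmas; hence the total is polynomial in the size of $G$. The main obstacle I anticipate is the \typeN-propagation case, since it is the only place where I cannot quote a previous lemma as a black box: it requires phrasing the restriction-of-a-global-RMIS argument carefully through a single articulation point or bridge, including the role of the aerial vertex $a$ as a stand-in for an external cover of the attachment point, so that Definition~\ref{def:welllabeledabcst}(1) can be applied to the offending child.
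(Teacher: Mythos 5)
Your proposal is correct and follows essentially the same route as the paper's proof: induction on the height of $\CT(x)$, base cases for $x\in\PV$ (checking Line~13 directly against Definition~\ref{def:welllabeledabcst}) and childless $x\in\C$ (Lemma~\ref{lem:component} applied vacuously), the \typeN short-circuit of Lines~03--04, and Lemmas~\ref{lem:articulation}, \ref{lem:bridge}, and \ref{lem:component} for the remaining internal cases. You in fact give more detail than the paper, which asserts the \typeN-propagation fact without proof; your aerial-vertex restriction argument is the right justification for it, with the one refinement that the clean case split is on whether $AP(c)\in M$ --- if $AP(c)\notin M$ then the restriction of $M$ to $G(\CT(c))$ together with $a$ is always a robust MIS of $G_a(\CT(c))$ containing $a$, regardless of whether $AP(c)$ happens to be covered from inside --- rather than on where its cover comes from.
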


\begin{proof}
We prove this result by induction on $h_x$, the height of the \ABCST $\CT(x)$.

Consider first a node $x\in\A\cup\B\cup\C\cup\PV\setminus\{r\}$ such that $h_x=0$. That implies that $x\in\C\cup\PV\setminus\{r\}$ and that $C(x)=\emptyset$. If $x\in\C\setminus\{r\}$, the result holds by Lemma \ref{lem:component}. If $x\in\PV$, Line 13 of \textbf{LabelSubTree} is executed and produces a well-labeling of $x$ by definition of a pendant node. Note that \textbf{LabelSubTree} terminates in polynomial time in both cases.

Consider now a node $x\in\A\cup\B\cup\C\setminus\{r\}$ such that $h_x>0$. The execution of \textbf{LabelSubTree} starts by a recursive call on each child of $x$. By induction assumption, each \ABCST $\CT(c)$ is well-labeled in polynomial time with $c\in C(x)$.

As $G(\CT(x))$ does not admit a robust MIS and $G_a(\CT(x))$ does not admit a robust MIS $M$ such that $a\in M$ if there exists one child $c\in C(x)$ such that $G(\CT(c))$ does not admit a robust MIS and $G_a(\CT(c))$ does not admit a robust MIS $M_c$ such that $a\in M_c$, the well-labeling of $\CT(c)$ and the test on Lines 03-04 of \textbf{LabelSubTree} imply that $x$ is labeled $\{(\typeN,\emptyset\}$ (in polynomial time) in this case.

Then, Lemmas \ref{lem:articulation}, \ref{lem:bridge}, and \ref{lem:component} allow us to conclude that the end of the execution of \textbf{LabelSubTree} (Lines 06 to 13) well-labels $\CT(x)$ in polynomial time.
\end{proof}

\begin{lemma}\label{lem:labeling}
If $G$ is not a tree, the execution of \textbf{FindRMIS} up to Line 19 produces a well-labeled \ABCT of $G$ in polynomial time.
\end{lemma}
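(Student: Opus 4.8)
The plan is to assemble this labeling result from the subtree analysis already carried out, and then verify the root on top of it. First I would apply Lemma~\ref{lem:subtree} to each child $c \in C(r)$ of the root: since every such $c$ lies strictly below $r$, the calls to \textbf{LabelSubTree}$(\CT,c)$ on Lines 10--11 of \textbf{FindRMIS} leave each $\CT(c)$ well-labeled, and each in polynomial time. As the number of children is bounded by the number of vertices of \CT, which is $\mathcal{O}(n)$, this whole phase costs polynomial time. It then remains to check that Lines 12--19 set $L(r)$ in accordance with Definition~\ref{def:welllabeledabct}, i.e. that $L(r) = \{(\typeN,\emptyset)\}$ exactly when $G$ has no robust MIS and $L(r) = \{(\typeE,M)\}$ with $M$ a robust MIS exactly when one exists.

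Next I would split on the test of Line 12. If no child of $r$ carries a \typeN label, then the hypotheses of Lemma~\ref{lem:testrmis4} are met (every $\CT(c)$ is well-labeled and \typeN-free), so the call \textbf{TestRMIS}$(\CT,r,\emptyset,\emptyset)$ on Line 15 returns $\bot$ precisely when $G$ admits no robust MIS, and otherwise returns an actual robust MIS $M$ of $G$. Lines 16--19 then record $(\typeN,\emptyset)$ or $(\typeE,M)$ accordingly, which is exactly what Definition~\ref{def:welllabeledabct} demands; polynomiality again comes from Lemma~\ref{lem:testrmis4}.

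The remaining case --- some child $c \in C(r)$ is labeled $(\typeN,\emptyset)$ --- is where the real work lies, and it is the step I expect to be the main obstacle. Here Line 13 sets $L(r) = \{(\typeN,\emptyset)\}$ and I must show $G$ genuinely has no robust MIS. Recall that a child of the component root $r$ is necessarily an articulation point $c \in \A$ (since, in the \ABCT, a $\C$-vertex is adjacent only to $\A$-vertices) with $AP(c) = c$, so $G$ decomposes at $c$ into $G(\CT(c))$ and the remainder of $G$, sharing only the vertex $c$. By well-labeling of $\CT(c)$ (property~1 of Definition~\ref{def:welllabeledabcst}), $G(\CT(c))$ admits no robust MIS and $G_a(\CT(c))$ admits none including its aerial node $a$. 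Suppose toward a contradiction that $G$ has a robust MIS $M$. Restricting $M$ to the vertices of $G(\CT(c))$ yields, depending on whether $c \in M$ and --- if not --- whether the neighbor covering $c$ lies inside or outside $G(\CT(c))$, either a robust MIS of $G(\CT(c))$ (with or without $c$) or a robust MIS of $G_a(\CT(c))$ containing $a$ (with $a$ playing the role of the external covering neighbor). Each alternative contradicts the \typeN label, so no such $M$ exists.

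The delicate point in that last argument is verifying that the restriction of $M$ is itself maximal and robust in the smaller (possibly aerial) graph. Maximality transfers because the only vertex of $G(\CT(c))$ having neighbors outside is the articulation point $c$, whose external coverage is faithfully simulated by $a$; robustness transfers via the cut-based characterization of robust MISs borrowed from~\cite{DKP15} and already exploited in the proof of Lemma~\ref{lem:testrmis1}, since every removable edge and every separating cut internal to $G(\CT(c))$ is inherited from $G$. This is precisely the articulation-point impossibility propagation recorded in the first bullet of the proof of Lemma~\ref{lem:articulation}, now read through the component root; invoking it closes the case. Combining the three cases establishes the two equivalences of Definition~\ref{def:welllabeledabct}, and the polynomial bounds gathered along the way give the stated running time.
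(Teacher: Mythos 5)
Your proof is correct and follows essentially the same route as the paper's: well-label each child subtree via Lemma~\ref{lem:subtree}, decide the root with Lemma~\ref{lem:testrmis4}, and propagate a child's \typeN label to the root on Lines 12--13. The only difference is that you explicitly justify the \typeN-propagation step (by restricting a hypothetical robust MIS of $G$ to $G(\CT(c))$ when $c\in M$, or to $G_a(\CT(c))$ with $a$ simulating external coverage when $c\notin M$), a fact the paper's proof merely asserts in one sentence --- a welcome elaboration, not a divergence.
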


\begin{proof}
By Lemmas \ref{lem:findrmistree} and \ref{lem:subtree}, the execution of \textbf{FindRMIS} on a graph $G$ (that is not a tree) up to Line 11 leads in polynomial time to a well-labeling of each \ABCST $\CT(c)$ with $c\in C(x)$.

As $G$ does not admit a robust MIS if there exists one child $c\in C(x)$ such that $G(\CT(c))$ does not admit a robust MIS and $G_a(\CT(c))$ does not admit a robust MIS $M_c$ such that $a\in M_c$, the well-labeling of $\CT(c)$ and the test on Lines 12-13 of \textbf{FindRMIS} imply that $r$ is labeled $\{(\typeN,\emptyset\}$ (in polynomial time) in this case.

Then, Lemma \ref{lem:testrmis4} allows us to conclude that the execution of \textbf{FindRMIS} up to Line 19 well-labels $\CT$ in polynomial time.
\end{proof}

\subsection{The end of the road}\label{sub:proofbuilding}

\begin{theorem}
The execution of of \textbf{FindRMIS} on a graph $G$ returns in polynomial time a robust MIS of $G$ if $G$ admits one, $\emptyset$ otherwise.
\end{theorem}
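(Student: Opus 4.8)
The plan is to treat the final theorem as a bookkeeping combination of the two regimes already established, reading the answer off the label of the root $r$ of $\CT$. I would organize the argument around the single branching point on Line 02 of \textbf{FindRMIS}, namely whether $\C=\emptyset$.

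First I would dispatch the trivial regime. If $G$ is a tree (equivalently, $\C=\emptyset$ under the standing connectivity assumption), then the test on Line 02 succeeds and the algorithm returns through Line 04. Lemma~\ref{lem:findrmistree} already certifies that this output is a robust MIS of $G$ and is produced in polynomial time, so nothing further is required in this case. Note that a tree always admits a robust MIS (trees lie in \forallMIS), so the $\emptyset$ branch of the specification is vacuous here.

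Second, I would handle the case $\C\neq\emptyset$, where control passes beyond Line 02. The key input is Lemma~\ref{lem:labeling}, which guarantees that after Line 19 the tree $\CT$ is \emph{well-labeled} in the sense of Definition~\ref{def:welllabeledabct}, and that this is achieved in polynomial time. From that definition, the label set $L(r)$ of the root is forced to be exactly $\{(\typeN,\emptyset)\}$ precisely when $G$ admits no robust MIS, and exactly $\{(\typeE,M)\}$ with $M$ a robust MIS of $G$ precisely when $G$ admits one. The deciding phase (Lines 20--23) then merely inspects $L(r)$: in the former case it returns $\emptyset$ on Line 21, and in the latter it returns the set $M$ carried in the \typeE label on Line 23, which by the very meaning of a well-labeled \ABCT is a genuine robust MIS of $G$. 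This matches the specification in both directions. For the running time I would simply aggregate the polynomial bounds: the \ABCT construction on Line 01 is a standard polynomial-time biconnected-component decomposition, Lemma~\ref{lem:labeling} bounds the labeling phase by a polynomial, and the deciding phase is a constant-size inspection of $L(r)$.

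I expect the theorem itself to be essentially routine, because all the genuine difficulty has already been absorbed into Lemma~\ref{lem:labeling} and, beneath it, the \textbf{TestRMIS} correctness lemmas~\ref{lem:testrmis1}--\ref{lem:testrmis4} and the per-type lemmas for \textbf{LabelNodeA}/\textbf{LabelNodeB}/\textbf{LabelNodeC}. The one point I would verify with a little care is the exhaustiveness and mutual exclusivity of the two outcomes for $L(r)$: well-labeling must force $L(r)$ to be a \emph{singleton} of type \typeN or of type \typeE, so that the conditional on Lines 20--22 is never ambiguous and always terminates with a correct output. That check, however, follows immediately from Definition~\ref{def:welllabeledabct}, so the argument closes without any new technical content.
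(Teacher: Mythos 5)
Your proposal is correct and follows essentially the same route as the paper's own proof: dispatch the tree case via Lemma~\ref{lem:findrmistree}, then invoke Lemma~\ref{lem:labeling} to get a well-labeled \ABCT and read the answer off $L(r)$ in the deciding phase, with the polynomial running time aggregated from the constituent lemmas. Your extra check that well-labeling forces $L(r)$ to be a singleton of type \typeN or \typeE is a sensible precaution that the paper leaves implicit in Definition~\ref{def:welllabeledabct}, but it introduces no genuinely different argument.
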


\begin{proof}
If $G$ is a tree, the result directly follows from Lemma \ref{lem:findrmistree}. Otherwise, Lemma \ref{lem:labeling} imply that the execution of \textbf{FindRMIS} up to Line 19 produces a well-labeled \ABCT of $G$ in polynomial time. 

Then, by definition of a well-labeled \ABCT, $L(r)=\{(\typeN,\emptyset)\}$ if and only if $G$ does not admit a robust MIS. Then, \textbf{FindRMIS} returns $\emptyset$ (in polynomial time) on Line 21 if and only if $G$ does not admit a robust MIS.

Hence, in the case where $G$ admits at least one robust MIS, \textbf{FindRMIS} terminates (in polynomial time) on Line 23. By definition of a well-labeled \ABCT, the returned set is a robust MIS of $G$, that ends the proof.
\end{proof}

\newpage

\section{Extra Materials}
\label{sec:extra}

Figure~\ref{fig:tagged} shows how the \ABCT in Figure~\ref{fig:ex_compo} is tagged after completion of the
labelling phase of Algorithm~\textbf{FindRMIS}---see Algorithm~\ref{algo:findrmis} in Annexe~\ref{sec:pseudo}. 

\begin{figure}[htb]
\begin{center}
\includegraphics[width=.7\textwidth] {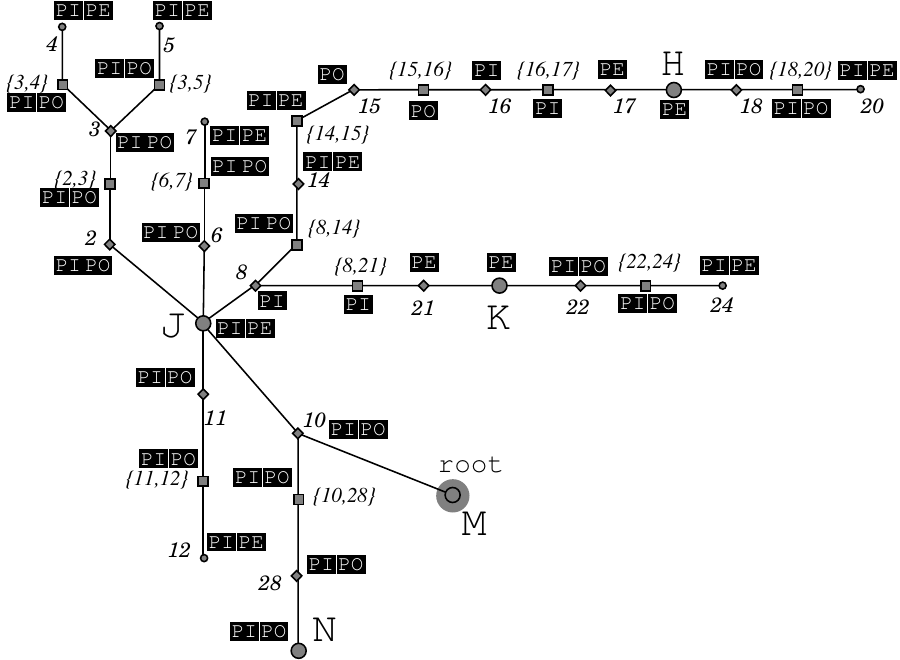}
\vspace{-0.5cm}
\end{center}
\caption{The \ABCT in Figure~\ref{fig:ex_compo} tagged after completion of the labelling phase.}\label{fig:tagged}
\end{figure}

The graph shown in Figure~\ref{fig:ex_compo} admits two robust MISs shown in Figures~\ref{fig:ex_rmis1}
and~\ref{fig:ex_rmis2}.

\begin{figure}[htb]
 \begin{minipage}[b]{.45\linewidth}
  \centering \includegraphics[width=\textwidth] {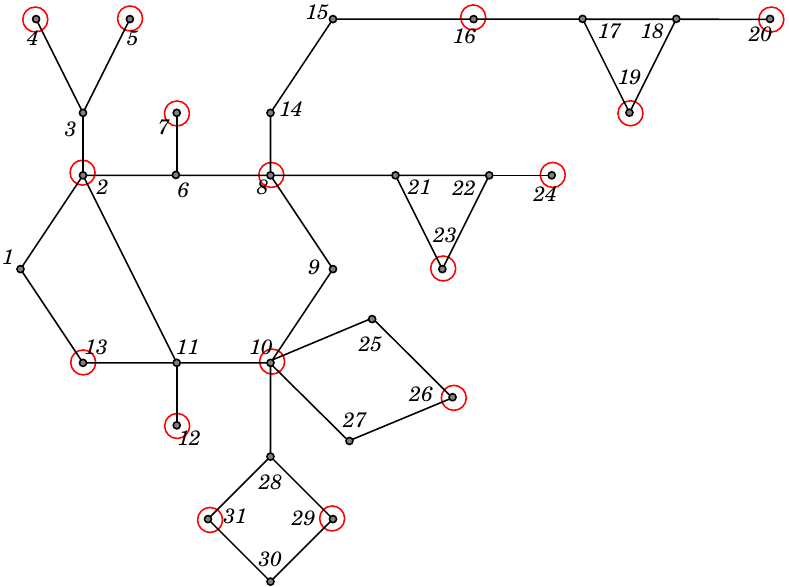}
 \caption{One possible RMIS of $G$ that includes $16$ vertices.  \label{fig:ex_rmis1}}
\end{minipage} \hfill
 \begin{minipage}[b]{.45\linewidth}
  \centering \includegraphics[width=\textwidth] {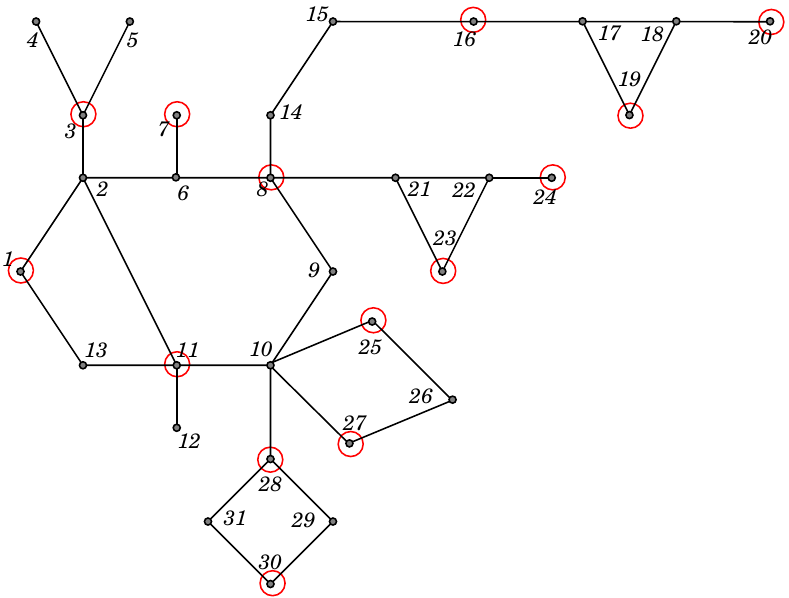}
 \caption{Another possible RMIS of $G$ that includes $14$ vertices. \label{fig:ex_rmis2}}
 \end{minipage}\hfill
\end{figure}

\end{document}